\newcommand{\ie}{i.\,e.,\xspace}
\newcommand{\eg}{e.\,g.,\xspace}
\newcommand{\vs}{vs.\ }
\newcommand{\etal}{et al.\xspace}
\newcommand{\wwlog}{W.\,l.\,o.\,g.\ }
\newcommand{\wrt}{w.\,r.\,t.\xspace}
\newcommand{\bigO}{\mathcal{O}}
\newcommand{\argmax}{\operatorname{argmax}}
\newcommand{\IMP}{\operatorname{IMP}}
\newcommand{\peak}{\operatorname{peak}}
\newcommand{\greed}{\textsc{greedy}\xspace}
\newcommand{\HH}{\operatorname{H}}
\newcommand{\GG}{\operatorname{G}}
\newcommand{\dejavu}{\textsc{dejaVu}}
\newcommand{\web}{\textsc{web-Google}\xspace}
\newcommand{\RR}{\mathbb{R}}
\newcommand{\cone}{\textsc{$C_{\vee}$}\xspace}
\newcommand{\ctwo}{\textsc{$C_{\mathcal{P}}$}\xspace}
\newcommand{\cthree}{\textsc{$C_{\wedge}$}\xspace}
\newcommand{\cfour}{\textsc{$C_{m}$}\xspace}
\newcommand{\phistar}{\phi^*}
\newtheorem{notation}{Notation}[section]
\newtheorem{definition}{Definition}[section]
\newtheorem{prop}{Proposition}[section]
\title{\Large Many-to-many Correspondences between Partitions:\\Introducing a Cut-based Approach\thanks{This work is partially supported by DFG grant FINCA (ME-3619/3-1) within the SPP 1736 Algorithms for Big Data.}}
\date{}
\author{} 
\author{Roland Glantz\thanks{Faculty of Informatics, Karlsruhe Institute of
  Technology (KIT), Germany, \texttt{rolandglantz@gmail.com}} ~and Henning Meyerhenke\thanks{Institute of Computer Science, University of Cologne, Germany, \texttt{h.meyerhenke@uni-koeln.de}}
}
\begin{document}

\maketitle

\begin{abstract}
\small 
Let $\mathcal{P}$ and $\mathcal{P}'$ be finite partitions of the
set $V$. Finding good correspondences between the parts of $\mathcal{P}$
and those of $\mathcal{P}'$ is helpful in classification,
pattern recognition, and network analysis. 
Unlike common similarity measures for partitions that yield only a single value,
we provide specifics on how $\mathcal{P}$ and $\mathcal{P'}$
correspond to each other. 

To this end, we first define natural collections of best
correspondences under three constraints \cone, \ctwo, and \cthree.
In case of \cone, the best correspondences form a
minimum cut basis of a certain bipartite graph, whereas the other two
lead to minimum cut bases of $\mathcal{P}$
\wrt $\mathcal{P}'$. 
We also introduce a constraint, \cfour, which tightens \cthree; 
both are useful for finding consensus partitions.
We then develop branch-and-bound algorithms for finding minimum $P_s$-$P_t$
cuts of $\mathcal{P}$ and thus $\vert \mathcal{P} \vert -1$ best
correspondences under \ctwo, \cthree, and \cfour, respectively.  

In a case study, we use the correspondences to gain insight into a
community detection algorithm. The results suggest, among others, that only very minor losses in the
quality of the correspondences occur if the branch-and-bound
algorithm is restricted to its greedy core. Thus, even for graphs with
more than half a million nodes and hundreds of communities, we can
find hundreds of best or almost best correspondences in less than a minute.\\[0.25ex]
\textbf{Keywords:} Many-to-many correspondences, similarities of partitions,
minimum cut basis, (graph) clustering
\end{abstract}

%
%
\section{Introduction}
\label{sec:intro}
Objective and quantitative methods to help humans with the task of grouping
objects in a meaningful way are the subject of cluster
analysis~\cite{Everitt2011a}. 
We con\-si\-der the case in which
the parts are non-overlapping and form a partition of data points into
parts/clusters/groups/regions/communities.
Even small changes in the data can provoke a clustering algorithm to
split or merge clusters and thus produce different local levels of
detail -- as an example, imagine a (dynamic) clustering algorithm working on data
changing over time. Comparisons of partitions resulting in a single number
expressing total (dis)similarity do not provide specifics
on how the clusters have split or merged, and a comparison restricted
to one-to-one correspondences may be in\-sufficient.

This paper is about a new approach for comparing two partitions
$\mathcal{P} = \{P_1, \dots, P_{\vert \mathcal{P} \vert}\}$ and
$\mathcal{P}' = \{P'_1, \dots, P'_{\vert \mathcal{P}' \vert}\}$ of the
same set.\footnote{If $\mathcal{P}$ and $\mathcal{P}'$ are partitions of
  sets that are different but have a large intersection, $W$, one can
  turn $\mathcal{P}$ and $\mathcal{P}'$ into the two related
  partitions $\{P_1 \cap W, \dots, P_{\vert \mathcal{P} \vert} \cap
  W\}$ and $\{P'_1 \cap W, \dots, P'_{\vert \mathcal{P}' \vert} \cap
  W\}$ of (the same set) $W$. If $W$ is large enough, the
  correspondences between the two new partitions will still reveal
  specifics on similarities between $\mathcal{P}$ and $\mathcal{P}'$.}
In Section~\ref{sec:app:soa} we describe some related work and specify properties that
our approach shares with standard similarity measures for
partitions~\cite{Meila2007a,Wagner2007a}.
The crucial difference is that we provide specifics on how $\mathcal{P}$ and
$\mathcal{P'}$ \emph{correspond} to each other, as opposed to just a single
number.
More specifically, a good many-to-many correspondence, short:
\emph{correspondence}, is a pair $(\mathcal{S}, \mathcal{S}')$ with
$\mathcal{S} \subseteq \mathcal{P}$, $\mathcal{S}' \subseteq
\mathcal{P}'$ and a low value of
\begin{equation}
\label{eq:firstMin1}
\phi(\mathcal{S}, \mathcal{S}') := \vert U_{\mathcal{S}} \triangle
U_{\mathcal{S}'} \vert,
\end{equation}

\noindent where $U_{\mathcal{S}}$ denotes the union of all sets in
$\mathcal{S}$, $\triangle$ denotes the symmetric difference, and
$\vert \cdot \vert$ denotes cardinality [total weight] if the elements
of $V$ are unweighted [weighted]. Thus, minimizing $\phi(\cdot,
\cdot)$ means finding similarities between $\mathcal{P}$ and
$\mathcal{P}'$ modulo unions of parts. 

\begin{figure}[tb]
\begin{center}
\includegraphics[height=2.3cm]{./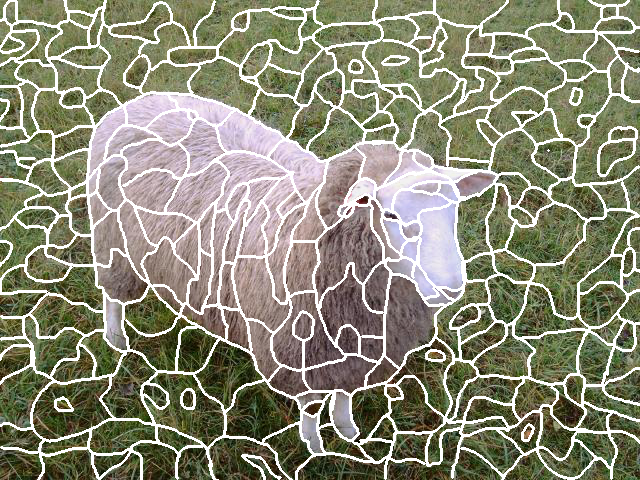} \quad
\includegraphics[height=2.3cm]{./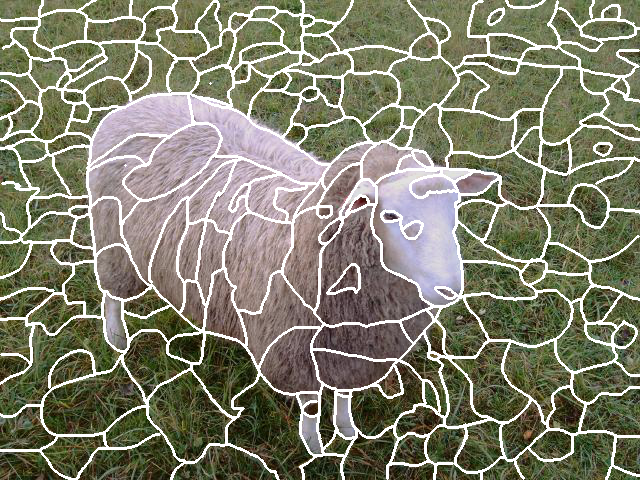}\\
~\\
\includegraphics[height=2.3cm]{./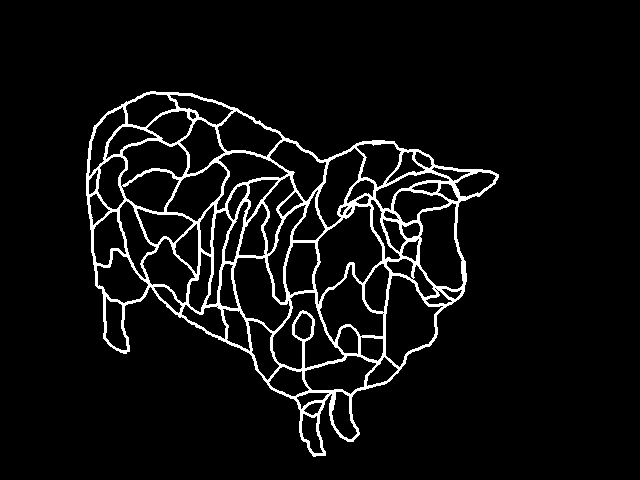} \quad
\includegraphics[height=2.3cm]{./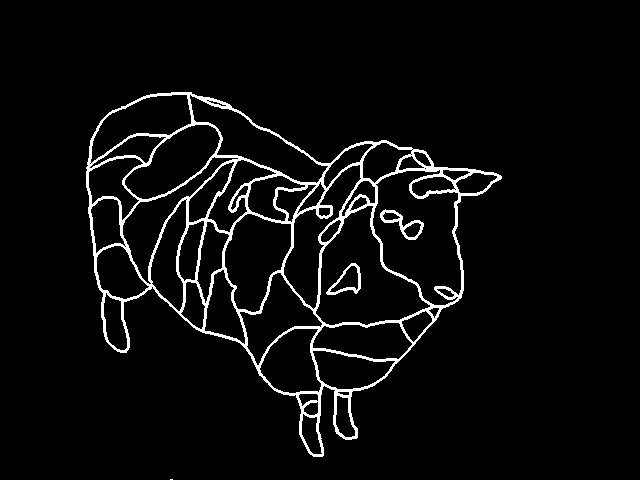}\\
\caption{Top: Two segmentations (by hand) of the same image. Poor
  match between individual regions left and right. Bottom: A good
  correspondence $(\mathcal{S}, \mathcal{S}')$, where $\mathcal{S}$
  and $\mathcal{S}'$ are the regions making up the sheep left and
  right, respectively.}
\label{fig:sheep}
\end{center}
\end{figure}
Among others, correspondences between partitions may be used to describe changes of
ground truth, discrepancies between a model and ground truth, or to
compare different solutions from (variations of) a (possibly
non-deterministic) algorithm. 

To illustrate correspondences further, we turn to applications in which
$\mathcal{P}$ and $\mathcal{P}'$ are segmentations, \ie partitions of
a set of pixels/voxels into regions.
We assume for simplicity that $\mathcal{P}$ and $\mathcal{P}'$ are based on the same
image. Ideally, a region
corresponds to a real-world object; see Figure~\ref{fig:sheep} for an example
of correspondences between different segmentations.
Finding such regions is hindered by noise, under-segmentation, over-segmentation or
occlusion. More scenarios motivating a comparison of $\mathcal{P}$
and $\mathcal{P}'$ using correspondences 
are described in Appendix~\ref{subsec:app:image analysis}.

\vspace{0.5ex}
\paragraph*{Contributions and outline.}
\label{subsec:contrib}
In Section~\ref{sec:problem-statement}, we define the problem and investigate
the connection between correspondences and cuts. We go on by introducing four constraints
(\cone, \ctwo, \cthree, and \cfour) on correspondences, ordered from weak to strong.

Our main objective is to develop methods for finding good correspondences between two partitions of the
same set \wrt all four constraints. 
(Due to space constraints, we focus on \ctwo-correspondences.)
This includes (i) an analytic objective function for finding optimal non-trivial
\ctwo-correspondences and a characterization of the problem in terms
of symmetric submodular minimization (see
Section~\ref{subsec:C2_submod}), (ii) a description of a natural
collection of $\vert \mathcal{P} \vert - 1$ good \ctwo-correspondences
(see Section~\ref{subsec:minST}) and (iii) asymptotic time
complexities for finding natural collections of \ctwo-correspondences
(see Section~\ref{subsec:goodC2_running_times}).


To \emph{compute} good correspondences between two partitions
  $\mathcal{P}$ and $\mathcal{P}'$ in practice, we develop branch-and-bound
  algorithms for finding minimum $P_s$-$P_t$ cuts of $\mathcal{P}$
  under the constraints \ctwo, \cthree and \cfour, respectively, see
  Section~\ref{sec:BB}. The algorithms are built around a greedy
  algorithm each, and the restriction to these greedy cores provides
  an alternative for calculating not always optimal but
  typically good \ctwo-, \cthree- and \cfour-correspondences quickly.

In Section~\ref{sec:real-sim} we use one of many possible
applications to evaluate the correspondence concept and our algorithms
for computing them. 
We investigate the effect that (i) a refinement option and
(ii) non-determinism has on the output of a community
detection (= graph clustering) algorithm.  It turns out that these two effects can indeed
be characterized in terms of correspondences: 
refinement does not change the general cluster assignment
significantly, whereas non-determinism leads to more
drastic changes.  Also, from an algorithmic point of view, only
minor losses in the solution quality 
are observed if the branch-and-bound algorithm is restricted to its greedy
  core. Thus, even for graphs with millions of edges 
  and hundreds of communities, we can find hundreds of best or
almost best correspondences in less than a minute.

\section{Correspondences, cuts, and constraints}
\label{sec:problem-statement}
This section lays the notational ground for 
computing good meaningful correspondences.

\subsection{Correspondences, cuts and optimal partners.}
\label{subsec:cuts}
The element $\mathcal{S}$ of a correspondence $(\mathcal{S}, \mathcal{S}')$
with $\mathcal{S} \notin \{\emptyset, \mathcal{P}\}$ gives rise to a cut $(\mathcal{S},
\mathcal{P} \setminus \mathcal{S})$ of $\mathcal{P}$. We measure the
size (weight) of such a cut by
\begin{equation}
\label{eq:phi}
\phi_{\mathcal{P}'}(\mathcal{S}) := \min_{\mathcal{S}' \subseteq
  \mathcal{P}'}\phi(\mathcal{S}, \mathcal{S}').
\end{equation}

\noindent Given $\mathcal{S} \subseteq \mathcal{P}$, \emph{one} way 
to minimize $\phi(\mathcal{S}, \mathcal{S}')$ is to let $\mathcal{S}'$ be
\begin{equation}
\label{eq:partner}
\mathcal{S_{\downarrow}}' := \{P' \in \mathcal{P}' : \vert U_{\mathcal{S}}
  \cap P' \vert > \frac{\vert P'\vert}{2}\}
\end{equation}

\noindent We call \emph{any} $\mathcal{S}' \in \mathcal{P}'$ with
$\phi(\mathcal{S}, \mathcal{S}') = \phi(\mathcal{S},
\mathcal{S_{\downarrow}}')$ an \emph{optimal partner} of
$\mathcal{S}$. In contrast to $S_{\downarrow}'$, an optimal
  partner of $\mathcal{S}$ may contain $P'$ with $\vert
U_{\mathcal{S}} \cap P' \vert = \vert P'\vert / 2$. A small cut
$(\mathcal{S}, \mathcal{P} \setminus \mathcal{S})$ gives rise to a
good correspondence $(\mathcal{S}, \mathcal{S}')$, where
$\mathcal{S}'$ is an optimal partner of $\mathcal{S}$. In this paper,
we frequently switch between correspondences and cuts.

\subsection{Examples of cuts and correspondences.}
\label{subsec:app:examples-cuts-corres}
\begin{figure}[b]
\begin{center}
\includegraphics[width=0.8\columnwidth]{./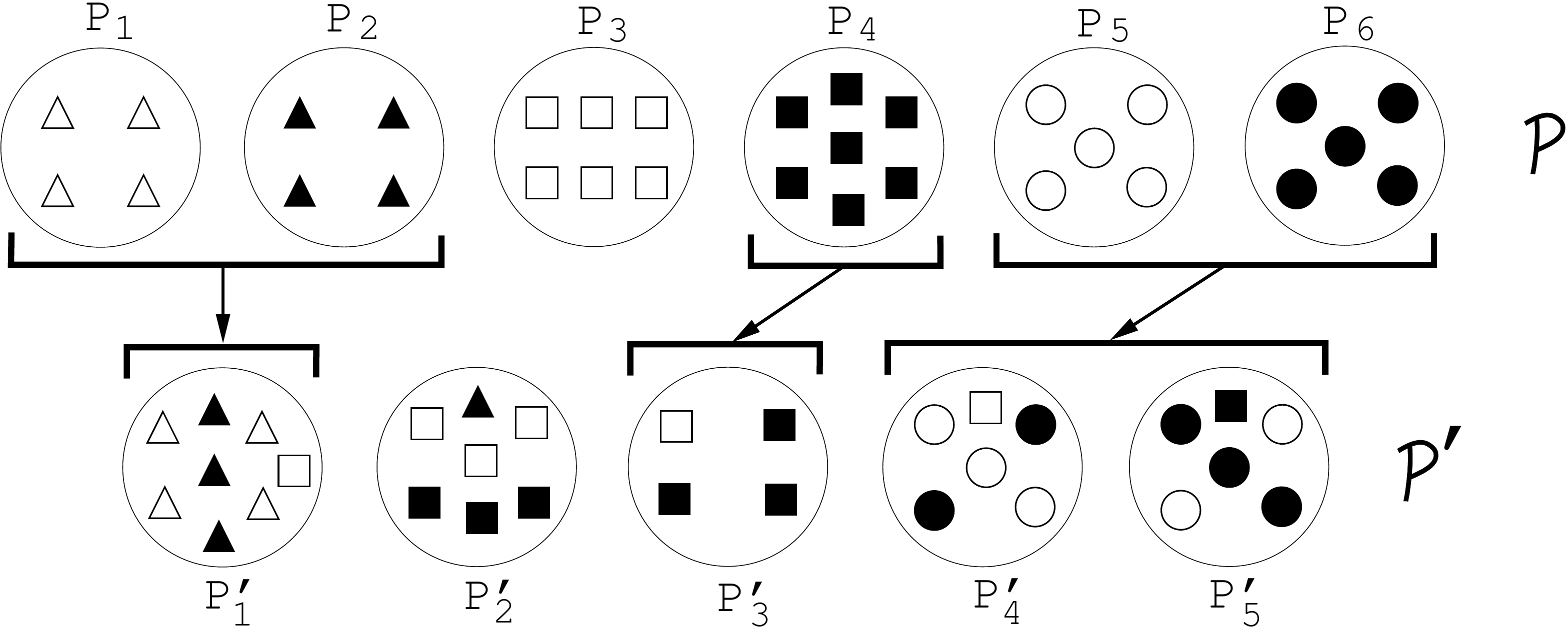}
\caption{Upper [lower] row of six [five] disks depicts partition
  $\mathcal{P}$ [$\mathcal{P}'$]. Upper [lower] brackets indicate
  subsets of $\mathcal{P}$ [$\mathcal{P}'$]; arrows indicate
  some good correspondences between subsets of $\mathcal{P}$ and 
  $\mathcal{P}'$ (see text).}
\label{fig:example}
\end{center}
\end{figure}
Figure~\ref{fig:example} depicts partitions $\mathcal{P}$,
$\mathcal{P}'$ of a set $V$ with 31 elements. The elements of $V$ are
represented by symbols indicating membership to the parts of
$\mathcal{P}$. The two subsets $\mathcal{S}$ of $\mathcal{P}$ giving
rise to the smallest cuts $(\mathcal{S}, \mathcal{P} \setminus
\mathcal{S})$ (size is 2) are the sets $\{P_1, P_2\}$ and $\{P_5,
P_6\}$. The optimal partners of these subsets are the subsets
$\{P'_1\}$ and $\{P'_4, P'_5\}$ of $\mathcal{P}'$, giving rise to the
correspondences $(\{P_1, P_2\}, \{P'_1\})$ and $(\{P_5, P_6\}, \{P'_4,
P'_5\})$, respectively. The optimal partner of $\{P_4\}$ is
$\{P'_3\}$.
If we reverse the roles of $\mathcal{P}$ and $\mathcal{P}'$ (see
Figure~\ref{fig:revExample} in Appendix~\ref{sub:exmp-cuts-corres}), the counterparts of the best
correspondences from before are the new best correspondences, \eg
$(\{P'_1\}, \{P_1, P_2\})$, and $(\{P'_4, P'_5\}, \{P_5, P_6\})$. The
counterpart of $(\{P_4\},\{P'_3\})$, however, is
gone. Indeed, the optimal partner of $\mathcal{S}' = \{P'_3\}$ is not
$\{P_4\}$ but $\emptyset$.
Thus, one must be aware that
swapping the roles of $\mathcal{P}$ and $\mathcal{P}'$ cannot always
be compensated by swapping $\mathcal{S}$ and $\mathcal{S}'$ in a
correspondence.

\subsection{Constraints on correspondences.}
\label{subsec:constraints}
We now define four constraints on correspondences
$(\mathcal{S}, \mathcal{S}')$, ordered from weak to strong, and
suggest cases in which they can be used.

Our first and weakest constraint, called \cone, just excludes correspondences that are trivial or
very bad:
\cone: $\mathcal{S} \notin \{\emptyset, \mathcal{P}\} \vee
\mathcal{S}' \notin \{\emptyset, \mathcal{P}'\}$. For more on
  \cone-correspondences see Appendix~\ref{sec:app:goodC1}.
A more specific constraint that makes sense is \ctwo: $\mathcal{S}
\notin \{\emptyset, \mathcal{P}\}$, \ie that $(\mathcal{S},
\mathcal{P} \setminus \mathcal{S})$ is a cut of
$\mathcal{P}$. \ctwo-correspondences are useful if one wants to
understand the formation of $\mathcal{P}$ in terms of
$\mathcal{P}'$. Exchanging the roles of $\mathcal{P}$ and
$\mathcal{P}'$ yields an analogous asymmetric constraint.

If one wants a correspondence to cut $\mathcal{P}$ \emph{and}
$\mathcal{P}'$, one can require \cthree: $\mathcal{S} \notin
\{\emptyset, \mathcal{P}\} \wedge \mathcal{S}' \notin \{\emptyset,
\mathcal{P}'\}$. In particular, a good \cthree-correspondence gives
rise to two similar cuts $(U_{\mathcal{S}}, U_{\mathcal{P} \setminus
  \mathcal{S}})$ and $(U_{\mathcal{S}'}, U_{\mathcal{P}' \setminus
  \mathcal{S}'})$ of $V$. For such a pair of similar cuts one can find
a cut $(U_{\mathcal{S}^*}, U_{\mathcal{P} \setminus \mathcal{S}^*})$
that mediates between $(U_{\mathcal{S}}, U_{\mathcal{P} \setminus
  \mathcal{S}})$ and $(U_{\mathcal{S}'}, U_{\mathcal{P}' \setminus
  \mathcal{S}'})$ in that $\max\{\vert \mathcal{S}^* \triangle
\mathcal{S} \vert, \vert \mathcal{S}^* \triangle \mathcal{S}' \vert\}$
is minimum. The overlay of $k$ such medial cuts then results in a
consensus partition $\mathcal{P}_c$ between $\mathcal{P}$ and
$\mathcal{P}'$ with $k - 1 \leq \vert \mathcal{P}_c \vert \leq 2^k$
(the medial cuts may or may not cross).
\ctwo-correspondences that do not fulfill \cthree, however, can
provide useful information if one wants to detect erratic differences
between $\mathcal{P}$ and $\mathcal{P}'$. As an example, assume that
$\mathcal{P}$ and $\mathcal{P}'$ consist of communities in a
network at times $t$ and $t' > t$, re\-spec\-tively. Moreover, let $P$
be a community in $\mathcal{P}$. If $(\{P\}, \mathcal{P} \setminus
\{P\})$ is in the minimum cut basis of $\mathcal{P}$ and if
$\emptyset$ is an optimal partner of $\{P\}$, this tells us that $P$
has disintegrated over time in a way that cannot be explained by a
good correspondence between $\mathcal{S}$ and $\mathcal{S'}$
(more on \cthree-correspondences in Appendix~\ref{sec:app:C3C4}).

A correspondence $(\mathcal{S}, \emptyset)$, however, is good, \ie
$\phi(\mathcal{S}, \emptyset)$ is low, whenever $U_{\mathcal{S}}$ is
small (note that $\mathcal{S} \neq \emptyset$ and $\mathcal{S}' =
\emptyset$ fulfill $\vert U_{\mathcal{S}} \triangle U_{\mathcal{S}'}
\vert = \vert U_{\mathcal{S}} \vert$). Even a good
\cthree-correspondence $(\mathcal{S}, \mathcal{S}')$ can be awkward,
\eg if $\vert U_{\mathcal{S}} \vert$ and $\vert U_{\mathcal{S}'}
\vert$ are small and $U_{\mathcal{S}} \cap U_{\mathcal{S}'} =
\emptyset$. Indeed, this means that $\vert U_{\mathcal{S}} \triangle
U_{\mathcal{S}'} \vert$ is still small, \ie the correspondence
$(\mathcal{S}, \mathcal{S}')$ is good, while $\mathcal{S}$ and
$\mathcal{S}'$ ``have nothing in common''. The purpose of
Definition~\ref{def:mutual} is to exclude these correspondences, \ie
to ensure that $\mathcal{S}$ and $\mathcal{S}'$ ``have a lot in
common'':
\begin{definition} 
A correspondence
  $(\mathcal{S}, \mathcal{S}')$ is called \emph{mutual} if all of the
  following holds.

\begin{enumerate}
\setlength\itemsep{1pt}
\item $\vert P \cap U_{\mathcal{S}'} \vert \geq \frac{\vert P
  \vert}{2}$ for all $P \in \mathcal{S}$,
\item $\vert P \cap U_{\mathcal{S}'} \vert \leq \frac{\vert P
  \vert}{2}$ for all $P \in \mathcal{P} \setminus \mathcal{S}$,
\item $\vert P' \cap U_{\mathcal{S}} \vert \geq \frac{\vert P'
  \vert}{2}$ for all $P' \in \mathcal{S}'$ and
\item $\vert P' \cap U_{\mathcal{S}} \vert \leq \frac{\vert P'
  \vert}{2}$ for all $P' \in \mathcal{P}' \setminus \mathcal{S}'$.
\end{enumerate}

\label{def:mutual}
\end{definition}

If $(\mathcal{S}, \mathcal{S}')$ with $\mathcal{S} \notin \{\emptyset,
\mathcal{P}\}$ is mutual, then $\mathcal{S}' \notin \{\emptyset,
\mathcal{P}\}$ too. Thus, a new meaningful constraint on
correspondences $(\mathcal{S}, \mathcal{S}')$ that is stronger than
\cthree is: 
\begin{equation*}
\cfour: \mathcal{S} \notin \{\emptyset, \mathcal{P}\} \wedge
\mbox{$(\mathcal{S}, \mathcal{S}')$ is mutual}.
\end{equation*}
For more on \cfour-correspondences see Appendix~\ref{sec:app:C3C4}.
In Appendix~\ref{subsec:app:mutual} we show that an optimal
\ctwo-correspondence or \cthree-correspondence is either mutual or
fulfills $\vert \mathcal{S} \vert \in \{1, \vert
\mathcal{P} \vert - 1\} \vee \vert \mathcal{S}' \vert \in \{1, \vert
\mathcal{P}' \vert - 1\}$.

\section{\ctwo-correspondences}
\label{sec:goodC2}
In this section we 
reduce the problem of finding a good
\ctwo-correspondence $(\mathcal{S}, \mathcal{S}')$ to the problem of
finding a small cut $(\mathcal{S}, \mathcal{P} \setminus \mathcal{S})$
of $\mathcal{P}$.
We then show that $\phi_{\mathcal{P}'}(\cdot)$ is a symmetric
submodular function on $2^{\mathcal{P}}$. Symmetry of
$\phi_{\mathcal{P}'}(\cdot)$ implies that one can find a minimum cut
basis of $\mathcal{P}$ by computing $\vert \mathcal{P} \vert - 1$
minimum $P_s$-$P_t$ cuts of $\mathcal{S}$. 
Finally, we discuss asymptotic
running times for finding good \ctwo-correspondences.

\subsection{\ctwo-correspondences and submodularity.}
\label{subsec:C2_submod}
The constraint \ctwo: $\mathcal{S} \notin \{\emptyset, \mathcal{P}\}$
on a correspondence $(\mathcal{S}, \mathcal{S}')$ does not constrain
$\mathcal{S}'$. The search for a ``good'' \ctwo-correspondence thus
basically amounts to finding $\emptyset \neq \mathcal{S} \subsetneq
\mathcal{P}$ such that $\phi_{\mathcal{P}'}(\mathcal{S}) :=
\min_{\mathcal{S}' \in \mathcal{P}'}\phi(\mathcal{S}, \mathcal{S}')$
(Eq.~(\ref{eq:phi})) is ``low''. Once we have $\mathcal{S}$,
we can find an optimal partner $\mathcal{S}'$ of $\mathcal{S}$ via
Eq.~(\ref{eq:partner}). The problem with this approach is that,
in its present form, $\phi_{\mathcal{P}'}(\mathcal{S})$ depends on
$\mathcal{S}'$. Proposition~\ref{prop:eq:firstMin1}  provides an
analytic expression for $\phi_{\mathcal{P}'}(\cdot)$ that does not
contain $\mathcal{S}'$:
\begin{prop}[Proof in Appendix~\ref{proof:eq:firstMin1}]
\label{prop:eq:firstMin1}
\begin{align}
&\phi_{\mathcal{P}'}(\mathcal{S}) =\sum_{P' \in
    \mathcal{P}'} \vert P' \vert \peak(\frac{\vert U_{\mathcal{S}}
    \cap P' \vert}{\vert P' \vert}), \quad \label{eq:Frac}
  \mbox{where} \\ &\peak(x) := \begin{cases} x, &
    \mbox{if } x \leq 1/2\\ 1 - x, & \mbox{if } x >
    1/2 \end{cases} \label{eq:peak}
\end{align}
\end{prop}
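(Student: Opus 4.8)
The plan is to exploit the fact that $\mathcal{P}'$ is a partition of $V$ in order to decompose the symmetric difference part by part, after which the minimization over $\mathcal{S}'$ splits into independent per-part choices.

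First I would fix $\mathcal{S}$ and abbreviate $A := U_{\mathcal{S}}$. Any admissible $\mathcal{S}' \subseteq \mathcal{P}'$ makes $U_{\mathcal{S}'}$ a union of \emph{entire} parts $P' \in \mathcal{P}'$, so choosing $\mathcal{S}'$ amounts to deciding, independently for each part $P'$, whether $P' \subseteq U_{\mathcal{S}'}$ (i.e.\ $P' \in \mathcal{S}'$) or $P' \cap U_{\mathcal{S}'} = \emptyset$ (i.e.\ $P' \notin \mathcal{S}'$). Because the parts $P'$ partition $V$, the symmetric difference decomposes as $\phi(\mathcal{S},\mathcal{S}') = |A \triangle U_{\mathcal{S}'}| = \sum_{P' \in \mathcal{P}'} |(A \triangle U_{\mathcal{S}'}) \cap P'|$.

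Next I would evaluate each summand under the two possible choices for $P'$. If $P' \notin \mathcal{S}'$, then $U_{\mathcal{S}'} \cap P' = \emptyset$, so the local contribution is $|A \cap P'| = |U_{\mathcal{S}} \cap P'|$; if $P' \in \mathcal{S}'$, then $P' \subseteq U_{\mathcal{S}'}$, so the local contribution is $|P' \setminus A| = |P'| - |U_{\mathcal{S}} \cap P'|$. Since the per-part decisions are independent, minimizing the whole sum equals minimizing each summand separately, yielding $\phi_{\mathcal{P}'}(\mathcal{S}) = \sum_{P' \in \mathcal{P}'} \min\{|U_{\mathcal{S}} \cap P'|,\, |P'| - |U_{\mathcal{S}} \cap P'|\}$. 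Writing $x := |U_{\mathcal{S}} \cap P'| / |P'|$ and factoring out $|P'|$ turns the inner minimum into $|P'| \cdot \min\{x, 1-x\}$, and since $\min\{x,1-x\}$ equals $x$ for $x \le 1/2$ and $1-x$ otherwise, it coincides with $\peak(x)$, giving exactly Eq.~(\ref{eq:Frac}).

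I do not expect a genuine obstacle; the only point requiring care is justifying that the minimization decouples across parts, which rests entirely on $\mathcal{P}'$ being a partition (disjoint parts, hence independent membership choices). As a consistency check I would note that the minimizing choice --- include $P'$ exactly when $|P'| - |U_{\mathcal{S}} \cap P'| < |U_{\mathcal{S}} \cap P'|$, i.e.\ when $|U_{\mathcal{S}} \cap P'| > |P'|/2$ --- reproduces the optimal partner $\mathcal{S}_{\downarrow}'$ of Eq.~(\ref{eq:partner}), confirming that the formula indeed computes $\phi(\mathcal{S}, \mathcal{S}_{\downarrow}')$. The argument is identical in the weighted case, since $|\cdot|$ is additive over the disjoint parts $P'$.
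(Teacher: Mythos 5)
Your proof is correct and takes essentially the same route as the paper's: decompose $\phi(\mathcal{S},\mathcal{S}')$ into per-part contributions ($\vert U_{\mathcal{S}} \cap P'\vert$ for $P' \notin \mathcal{S}'$, $\vert P'\vert - \vert U_{\mathcal{S}} \cap P'\vert$ for $P' \in \mathcal{S}'$), observe that the minimization over $\mathcal{S}'$ decouples across the parts of $\mathcal{P}'$, and factor out $\vert P'\vert$ to recognize $\peak(\cdot)$. The only cosmetic difference is that you split the symmetric difference by intersecting with each $P'$ directly, whereas the paper first writes it as $\vert U_{\mathcal{S}} \setminus U_{\mathcal{S}'}\vert + \vert U_{\mathcal{S}'} \setminus U_{\mathcal{S}}\vert$ before summing over parts.
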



The function $\peak(\cdot)$ in Eq.~(\ref{eq:peak}), called
(classification) error in~\cite{Tan2005a}, is an example of a
\emph{generator} as defined in~\cite{Simovici2002a}: a function $f :
     [0,1] \mapsto \RR$ is a generator if it is concave and $f(0) =
     f(1) = 0$ (hence $f(\cdot)$ is also subadditive). In addition,
     $\peak(\cdot)$ is symmetric, \ie $\peak(p) = \peak(1-p)$ for all
     $p \in [0,1]$. Other examples of symmetric generators are the
     binary entropy function
     $H(\cdot)$~\cite{MacKay2003a,Tan2005a,Simovici2002a} and the Gini
     impurity measure $G(\cdot)$~\cite{Tan2005a,Simovici2002a}. We
     could have chosen $\HH(\cdot)$, $\GG(\cdot)$ or any other
     nontrivial symmetric generator (computable in constant time)
     instead of $\peak(\cdot)$. The minimization of
     Eq.~(\ref{eq:Frac}) would then have the same asymptotic time
     complexity (see Section~\ref{subsec:goodC2_running_times}). 

\begin{definition}[Symmetric, (sub)modular]
\label{def:submodular}
Let $\mathcal{P}$ be a set. A function $\Pi : 2^{\mathcal{P}} \mapsto
\RR $ is called \emph{symmetric} if $\Pi(\mathcal{S}) =
\Pi(\mathcal{P} \setminus \mathcal{S})$ for all $\mathcal{S} \subseteq
\mathcal{P}$. Furthermore, $\Pi(\cdot)$ is called \emph{submodular} if
$\Pi(\mathcal{S}_1 \cup \mathcal{S}_2) \leq \Pi(\mathcal{S}_1) +
\Pi(\mathcal{S}_2) - \Pi(\mathcal{S}_1 \cap \mathcal{S}_2)$ for all
$\mathcal{S}_1, \mathcal{S}_2 \subseteq \mathcal{P}$. If $\Pi(\cdot)$
fulfills the above with ``$=$'' instead of ``$\leq$'', then
$\Pi(\cdot)$ is called \emph{modular}.
\end{definition}

\begin{prop}[Proof in Appendix~\ref{proof:prop:submodular}]
\label{prop:submodular}
$\phi_{\mathcal{P}'}(\cdot)$ in Eq.~(\ref{eq:Frac}) is symmetric
and submodular.
\end{prop}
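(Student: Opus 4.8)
The plan is to exploit the additive structure of Eq.~(\ref{eq:Frac}). Since $\phi_{\mathcal{P}'}$ is a sum over the parts $P' \in \mathcal{P}'$, and since both symmetry and submodularity are preserved under finite summation, it suffices to establish both properties for a single summand
$$g_{P'}(\mathcal{S}) := \vert P' \vert\, \peak\Big(\frac{\vert U_{\mathcal{S}} \cap P' \vert}{\vert P' \vert}\Big).$$
The observation I would use throughout is that, because the parts of $\mathcal{P}$ are pairwise disjoint, the quantity $m(\mathcal{S}) := \vert U_{\mathcal{S}} \cap P' \vert = \sum_{P \in \mathcal{S}} \vert P \cap P' \vert$ is a nonnegative, monotone, \emph{modular} function of $\mathcal{S} \subseteq \mathcal{P}$, with $m(\mathcal{P}) = \vert P' \vert$.

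For symmetry I would use that $\mathcal{P}$ partitions $V$, so $U_{\mathcal{P} \setminus \mathcal{S}} = V \setminus U_{\mathcal{S}}$ and hence $m(\mathcal{P}\setminus\mathcal{S}) = \vert P' \vert - m(\mathcal{S})$. Dividing by $\vert P' \vert$ shows that the argument of $\peak$ for $\mathcal{P}\setminus\mathcal{S}$ equals $1$ minus the argument for $\mathcal{S}$; the symmetry $\peak(p) = \peak(1-p)$ noted after Eq.~(\ref{eq:peak}) then leaves each summand, and therefore $\phi_{\mathcal{P}'}$, invariant under complementation. This is the quick half.

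For submodularity I would reduce to the general fact that the composition of a concave function with a nonnegative modular function is submodular. Writing $c := \vert P' \vert$ and $f(t) := c\,\peak(t/c)$ on $[0,c]$, the map $f$ is concave (composing the concave generator $\peak$ with a linear reparametrization preserves concavity), and $g_{P'}(\mathcal{S}) = f(m(\mathcal{S}))$. Given $\mathcal{S}_1,\mathcal{S}_2$, set $A := m(\mathcal{S}_1 \cap \mathcal{S}_2)$, $B := m(\mathcal{S}_1)$, $C := m(\mathcal{S}_2)$, $D := m(\mathcal{S}_1 \cup \mathcal{S}_2)$. Modularity of $m$ gives $D = B + C - A$, and monotonicity gives $A \le B$ and $A \le C$; hence $\delta := B - A = D - C \ge 0$ while $A \le C$. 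The desired inequality $f(D) \le f(B) + f(C) - f(A)$ (Definition~\ref{def:submodular}) is then equivalent to $f(C+\delta) - f(C) \le f(A+\delta) - f(A)$, which is exactly the diminishing-returns property of a concave function on two intervals of equal length $\delta$, the right one $[C,C+\delta]$ being a rightward shift of the left one $[A,A+\delta]$. Summing over $P'$ yields submodularity of $\phi_{\mathcal{P}'}$.

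The main obstacle is not any single computation but getting the bookkeeping of this composition argument exactly right: one must verify that $m$ really is modular and monotone (which hinges on disjointness of the parts of $\mathcal{P}$ and on $\vert P \cap P' \vert \ge 0$), that the four values $A \le B, C \le D$ all lie in the domain $[0,c]$ of $f$, and that the equal-length reformulation $\delta = B - A = D - C$ holds. Once these are in place, the concavity slope inequality does the real work, and the reduction to per-$P'$ summands handles the rest.
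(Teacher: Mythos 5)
Your proposal is correct and follows essentially the same route as the paper's proof: decompose $\phi_{\mathcal{P}'}(\cdot)$ into per-$P'$ summands, obtain symmetry from $\peak(p) = \peak(1-p)$ applied to the complemented argument, and obtain submodularity of each summand because it is a concave function composed with a nonnegative modular function. The only difference is that the paper cites this last fact (concave composed with nonnegative modular implies submodular) from the literature, whereas you verify it directly via the diminishing-increments inequality for concave functions on the equal-length intervals $[A, A+\delta]$ and $[C, C+\delta]$ --- a harmless, self-contained elaboration of the same key lemma.
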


\subsection{Minimum $P_s$-$P_t$ cuts.}
\label{subsec:minST}
We are actually interested in a larger set of
\emph{good} correspondences between $\mathcal{P}$ and $\mathcal{P}'$
(rather than a single one)
or, equivalently, in a larger set of small
cuts of $\mathcal{P}$. A natural set of small cuts is formed by
minimum $P_s$-$P_t$ cuts:
\begin{definition} 
\label{def:s-t-cuts}
Let $P_s \neq P_t \in \mathcal{P}$. Any pair $(\mathcal{S}_s,
\mathcal{S}_t)$ with $P_s \in \mathcal{S}_s$, $P_t \in \mathcal{S}_t$
and $\mathcal{S}_t = \mathcal{P} \setminus \mathcal{S}_s$ is called a
\emph{$P_s$-$P_t$ cut} of $\mathcal{P}$. A $P_s$-$P_t$ cut $(\mathcal{S}_s,
\mathcal{S}_t)$ is minimum if $\phi_{\mathcal{P}'}(\mathcal{S}_s)$,
and thus $\phi_{\mathcal{P}'}(\mathcal{S}_t)$, is minimum
\wrt all $P_s$-$P_t$ cuts. 
\end{definition}

Analogous to graphs, there exists a minimum cut basis of $\mathcal{P}$
\wrt $\phi_{\mathcal{P}'}(\cdot)$ made up of $\vert
\mathcal{P} \vert - 1$ minimum $P_s$-$P_t$ cuts. This follows from
$\phi_{\mathcal{P}'}(\cdot)$ being
symmetric~\cite{Cheng1992a}. Moreover, the cuts in the minimum basis
are non-crossing (two cuts are non-crossing if their cut sides are
pairwise nested or disjoint~\cite{HartmannW12cut}). This is a consequence of
$\phi_{\mathcal{P}'}(\cdot)$ being
submodular~\cite{Gomory1961a,Queyranne98a}. A minimum
basis of cuts of $\mathcal{P}$ can be represented concisely by a Gomory-Hu tree~\cite{Gomory1961a}.



\subsection{Asymptotic time for minimum cut basis of \ctwo-correspondences.}
\label{subsec:goodC2_running_times}
To compute a minimum basis of cuts of $\mathcal{P}$ under the
constraint \ctwo, we have to compute $\vert \mathcal{P} \vert - 1$
minimum $P_s$-$P_t$
cuts~\cite{Gomory1961a,Gusfield90a}. Unfortunately, computing such a
cut in the setting of general symmetric submodular minimization is as
hard as minimizing a general non-symmetric submodular
function~\cite{Queyranne98a}; $\bigO(\vert \mathcal{P}
\vert^7) \log \vert \mathcal{P} \vert$ evaluations of
$\phi_{\mathcal{P}'}(\cdot)$ would be needed to find a single
minimum $P_s$-$P_t$ cut~\cite[Theorem
  4.3]{Iwata2001a}. Fortunately, finding $P_s$-$P_t$ cuts is
  easier in our case (proof in
  Appendix~\ref{proof:prop:run-time-ctwo}):

%
\begin{prop}
\label{prop:running-nontrivial:P}
A minimum cut basis of $\mathcal{P}$ \wrt
$\phi_{\mathcal{P}'}(\cdot)$ can be computed in time
\begin{equation}
\bigO(\vert V \vert + \vert \mathcal{P} \vert^3 \vert \mathcal{P}'
\vert + \vert \mathcal{P} \vert^2 \vert \mathcal{P}'
\vert^2)
\label{firstBigO}
\end{equation}
\vspace{-0.25cm}\centering \text{or in}\vspace{-0.25cm} \\
\begin{equation}
\label{secondBigO}
\begin{cases}
\bigO(\vert V \vert + \vert \mathcal{P} \vert^3 \vert \mathcal{P}' \vert \log (2 + 
\vert \mathcal{P} \vert^2 / \vert \mathcal{P}' \vert)), & \text{if~}
\vert \mathcal{P} \vert \leq \vert \mathcal{P}' \vert\\
\bigO(\vert V \vert + \vert \mathcal{P} \vert^2 \vert \mathcal{P}' \vert^2 \log (2 + 
\vert \mathcal{P}' \vert^2 / \vert \mathcal{P} \vert)), &
\text{otherwise}
\end{cases}
\end{equation}
\end{prop}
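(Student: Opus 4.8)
The plan is to collapse the abstract symmetric submodular minimization---whose general cost the paper has just noted to be prohibitive---into an ordinary maximum-flow computation on an explicit bipartite graph. The first step is to rewrite the objective. Starting from the analytic form in Proposition~\ref{prop:eq:firstMin1} and using $|P'|\,\peak(x) = \min(|U_{\mathcal{S}}\cap P'|,\,|P'|-|U_{\mathcal{S}}\cap P'|)$ together with $|P'| - |U_{\mathcal{S}}\cap P'| = |U_{\mathcal{P}\setminus\mathcal{S}}\cap P'|$, one obtains $\phi_{\mathcal{P}'}(\mathcal{S}) = \sum_{P'\in\mathcal{P}'}\min(|U_{\mathcal{S}}\cap P'|,\,|U_{\mathcal{P}\setminus\mathcal{S}}\cap P'|)$. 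Writing $a_{ij} := |P_i \cap P'_j|$ (with weights if $V$ is weighted), this is a graph-cut value: build the bipartite graph $G$ with one terminal node $p_i$ per part $P_i\in\mathcal{P}$, one Steiner node $q_j$ per part $P'_j\in\mathcal{P}'$, and an edge $\{p_i,q_j\}$ of capacity $a_{ij}$. For any bipartition induced by $\mathcal{S}$ on the terminals, placing each $q_j$ on its cheaper side contributes exactly $\min(\sum_{i\in\mathcal{S}}a_{ij},\,\sum_{i\notin\mathcal{S}}a_{ij})$, so $\phi_{\mathcal{P}'}(\mathcal{S})$ equals the minimum cut of $G$ that separates the terminals according to $\mathcal{S}$. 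In particular a minimum $P_s$-$P_t$ cut of $\mathcal{P}$ is exactly a minimum $p_s$-$p_t$ cut of $G$, computable by a single maximum-flow call.

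Second, I would assemble the basis. Counting all intersections $a_{ij}$ costs $\bigO(|V|)$ (one pass, each element increments the counter of its $(\mathcal{P},\mathcal{P}')$-cell), and building $G$ adds at most $\bigO(|\mathcal{P}||\mathcal{P}'|)$ edges. By the results recalled in Section~\ref{subsec:minST} (and since $\phi_{\mathcal{P}'}(\cdot)$ is symmetric and submodular by Proposition~\ref{prop:submodular}), a minimum cut basis of $\mathcal{P}$ consists of $|\mathcal{P}|-1$ minimum $P_s$-$P_t$ cuts arranged in a Gomory--Hu tree. Translated to $G$, this is a Gomory--Hu tree on the terminal set $\mathcal{P}$ with the $q_j$ acting as Steiner nodes; the standard construction still uses only $|\mathcal{P}|-1$ maximum-flow computations, and its contraction steps only merge terminals, so every flow instance has at most $|\mathcal{P}|+|\mathcal{P}'|$ vertices and at most $|\mathcal{P}||\mathcal{P}'|$ edges. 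For each of the $|\mathcal{P}|-1$ cuts, an optimal partner $\mathcal{S}'$ is read off via Eq.~(\ref{eq:partner}) (equivalently, from the side each $q_j$ takes in the min cut) at no extra asymptotic cost.

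Third, I would substitute maximum-flow running times. For the bound~(\ref{firstBigO}) I plug a generic $\bigO(nm)$ maximum-flow algorithm into $G$ with $n=|\mathcal{P}|+|\mathcal{P}'|$ and $m=|\mathcal{P}||\mathcal{P}'|$; multiplying by the $|\mathcal{P}|-1$ flows and adding the $\bigO(|V|)$ setup gives $\bigO(|V| + |\mathcal{P}|^2|\mathcal{P}'|(|\mathcal{P}|+|\mathcal{P}'|))$, which is exactly~(\ref{firstBigO}). For~(\ref{secondBigO}) I would instead use a maximum-flow algorithm that exploits the bipartite structure of $G$ and charges only its smaller side $\min(|\mathcal{P}|,|\mathcal{P}'|)$ (a dynamic-tree bipartite preflow-push method); substituting its time bound and again multiplying by $|\mathcal{P}|-1$ yields the two cases of~(\ref{secondBigO}), the case distinction being precisely whether $\mathcal{P}$ or $\mathcal{P}'$ is the smaller side.

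The main obstacle is the first step: proving that the submodular cut value $\phi_{\mathcal{P}'}(\mathcal{S})$ coincides with an $s$-$t$ min-cut value in $G$, i.e.\ that optimizing over the Steiner placements of the $q_j$ reproduces the $\peak$-weighted sum for every $\mathcal{S}$ simultaneously. This identity is what lets us bypass the $\bigO(|\mathcal{P}|^7\log|\mathcal{P}|)$ general symmetric-submodular route and replace it by ordinary flows. The remaining care is bookkeeping: verifying that the Gomory--Hu construction with Steiner nodes really needs only $|\mathcal{P}|-1$ flows and that contractions never enlarge the flow instances, and then matching the chosen max-flow time bounds termwise to~(\ref{firstBigO}) and~(\ref{secondBigO}).
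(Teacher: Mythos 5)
Your proposal is correct and takes essentially the same route as the paper: its own proof likewise reduces each minimum $P_s$-$P_t$ cut of $\mathcal{P}$ \wrt $\phi_{\mathcal{P}'}(\cdot)$ to a minimum $P_s$-$P_t$ cut of the weighted bipartite contingency-table graph $G$ of Appendix~\ref{subsec:bipartite} (with the parts of $\mathcal{P}'$ acting as free/Steiner nodes), and then charges $\vert \mathcal{P} \vert - 1$ max-flow computations using exactly the two algorithms you invoke, namely the $\bigO(nm)$ flow algorithm of~\cite{Orlin2013a} and the bipartite flow algorithm of~\cite{Ahuja1994a}, as in the proof of Proposition~\ref{prop:running-nontrivial:G}. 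The only difference is expository: you spell out the identity $\phi_{\mathcal{P}'}(\mathcal{S}) = \sum_{P' \in \mathcal{P}'} \min\{\vert U_{\mathcal{S}} \cap P' \vert, \vert U_{\mathcal{P} \setminus \mathcal{S}} \cap P' \vert\}$ and the Gomory--Hu/Steiner bookkeeping that the paper leaves implicit.
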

%


The new notation in Definition~\ref{def:distribs} below helps
to prove Propositions~\ref{prop:evaluateFrac} and~\ref{prop:complexity}.
\begin{definition}[Distributions $d_{P'}\lbrack \cdot \rbrack$]
\label{def:distribs}
Let $P' \in \mathcal{P}'$. The distribution of $P'$ \wrt $\mathcal{P}$
is the vector $d_{P'}[\cdot]$ of length $\vert \mathcal{P} \vert$
defined by $d_{P'}[i] := \vert P_i \cap P' \vert \mbox{~for~} 1 \leq i
\leq \vert \mathcal{P} \vert$.
\end{definition}
The computation of all distributions (necessary to compute $\vert
U_{\mathcal{S}} \cap P' \vert$ in Eqs.~(\ref{eq:partner})
and~(\ref{eq:Frac})), \ie the contingency table~\cite{Wagner2007a}, takes
time $\bigO(\vert V \vert + \vert \mathcal{P} \vert \vert \mathcal{P}'
\vert)$, see Appendix~\ref{subsec:goodC1_running_times}. 
%
The next result, Proposition~\ref{prop:evaluateFrac},
follows directly from the fact that, due to
$\vert U_{\mathcal{S}} \cap P' \vert = \sum_{i: P_i \in \mathcal{S}}
d_{P'}[i]$, the term $\vert U_{\mathcal{S}} \cap P' \vert$ can be
computed in $\bigO(\vert \mathcal{P} \vert)$ for any $P' \in
\mathcal{P}'$. It allows to derive Proposition~\ref{prop:complexity} afterwards.
\begin{prop}
\label{prop:evaluateFrac}
Given all distributions and $\mathcal{S} \subseteq \mathcal{P}$, the
calculation of $\mathcal{S}'$, as defined in
Eq.~(\ref{eq:partner}), and the evaluation of
$\phi_{\mathcal{P}'}(\mathcal{S})$, as defined in
Eq.~(\ref{eq:Frac}), can both be done in time $\bigO(\vert
\mathcal{P} \vert \vert \mathcal{P}' \vert)$.
\end{prop}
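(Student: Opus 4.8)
The plan is to observe that both quantities are produced by a single sweep over the parts of $\mathcal{P}'$, where the per-part work is dominated by the cost of evaluating $\vert U_{\mathcal{S}} \cap P' \vert$. As noted immediately before the proposition, for a fixed $P' \in \mathcal{P}'$ we have $\vert U_{\mathcal{S}} \cap P' \vert = \sum_{i : P_i \in \mathcal{S}} d_{P'}[i]$, and since the distribution $d_{P'}[\cdot]$ has length $\vert \mathcal{P} \vert$, this sum is computable in $\bigO(\vert \mathcal{P} \vert)$. The same pass yields $\vert P' \vert = \sum_{i} d_{P'}[i]$ in $\bigO(\vert \mathcal{P} \vert)$ as well (or $\vert P' \vert$ is simply read off, having been stored alongside the distributions).

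For the calculation of $\mathcal{S}' = \mathcal{S_{\downarrow}}'$ via Eq.~(\ref{eq:partner}), I would iterate over all $P' \in \mathcal{P}'$; for each such $P'$ I compute $\vert U_{\mathcal{S}} \cap P' \vert$ as above, then test in constant time whether $\vert U_{\mathcal{S}} \cap P' \vert > \vert P' \vert / 2$, adding $P'$ to $\mathcal{S}'$ exactly when the test succeeds. Each part thus costs $\bigO(\vert \mathcal{P} \vert)$, and summing over the $\vert \mathcal{P}' \vert$ parts gives $\bigO(\vert \mathcal{P} \vert \vert \mathcal{P}' \vert)$.

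For the evaluation of $\phi_{\mathcal{P}'}(\mathcal{S})$ via Eq.~(\ref{eq:Frac}), I would reuse the same loop: for each $P'$, having $\vert U_{\mathcal{S}} \cap P' \vert$ and $\vert P' \vert$ in hand, I form the ratio $\vert U_{\mathcal{S}} \cap P' \vert / \vert P' \vert$, apply $\peak(\cdot)$ --- which is $\bigO(1)$ by its piecewise-linear definition in Eq.~(\ref{eq:peak}) --- multiply by $\vert P' \vert$, and accumulate into a running sum. Each term again costs $\bigO(\vert \mathcal{P} \vert)$, dominated by forming $\vert U_{\mathcal{S}} \cap P' \vert$, so the whole sum costs $\bigO(\vert \mathcal{P} \vert \vert \mathcal{P}' \vert)$, as claimed.

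There is no genuine obstacle here: the statement is essentially a bookkeeping consequence of the per-part bound already established, and the two computations share the same dominant step. The only points worth a sentence of care are that $\vert P' \vert$ is available within the stated budget (it is, being a by-product of the distributions) and that $\peak(\cdot)$ is evaluable in constant time (which holds by definition and was flagged as a requirement when the generator was introduced). Neither task therefore exceeds $\bigO(\vert \mathcal{P} \vert \vert \mathcal{P}' \vert)$.
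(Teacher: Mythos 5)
Your proof is correct and follows exactly the paper's reasoning: the paper derives this proposition directly from the observation that $\vert U_{\mathcal{S}} \cap P' \vert = \sum_{i : P_i \in \mathcal{S}} d_{P'}[i]$ costs $\bigO(\vert \mathcal{P} \vert)$ per part $P'$, so that a single sweep over the $\vert \mathcal{P}' \vert$ parts handles both Eq.~(\ref{eq:partner}) and Eq.~(\ref{eq:Frac}). You merely spell out the per-part bookkeeping (the threshold test and the constant-time evaluation of $\peak(\cdot)$) that the paper leaves implicit.
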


\begin{prop}
\label{prop:complexity}
Finding an optimal \ctwo-correspondence takes time $\bigO(\vert V
\vert + \vert \mathcal{P} \vert^4 \vert \mathcal{P}' \vert)$ if one
first minimizes $\mathcal{S}$ in Eq.~(\ref{eq:Frac}) through
general symmetric submodular minimization and then determines the
optimal partner $\mathcal{S}'$ of $\mathcal{S}$ using
Eq.~(\ref{eq:partner}). For a proof see
Appendix~\ref{proof:prop:complexity}.
\end{prop}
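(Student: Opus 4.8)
The plan is to decompose the running time into three phases: preprocessing (building the contingency table), the symmetric submodular minimization itself, and the final computation of the optimal partner. By Proposition~\ref{prop:submodular} the objective $\phi_{\mathcal{P}'}(\cdot)$ is symmetric and submodular, so minimizing it over nontrivial subsets $\emptyset \neq \mathcal{S} \subsetneq \mathcal{P}$ is exactly a symmetric submodular function minimization problem. For such problems Queyranne's algorithm~\cite{Queyranne98a} finds a global minimizer using only $\bigO(\vert \mathcal{P} \vert^3)$ evaluations of the objective (oracle calls), which is far fewer than the $\bigO(\vert \mathcal{P} \vert^7 \log \vert \mathcal{P} \vert)$ evaluations required for general, non-symmetric submodular minimization.

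First I would build all distributions $d_{P'}[\cdot]$ (equivalently, the contingency table), which by the discussion following Definition~\ref{def:distribs} takes time $\bigO(\vert V \vert + \vert \mathcal{P} \vert \vert \mathcal{P}' \vert)$. With these precomputed, Proposition~\ref{prop:evaluateFrac} guarantees that each evaluation of $\phi_{\mathcal{P}'}(\mathcal{S})$ costs only $\bigO(\vert \mathcal{P} \vert \vert \mathcal{P}' \vert)$. Multiplying the $\bigO(\vert \mathcal{P} \vert^3)$ oracle calls of Queyranne's algorithm by the per-call cost yields $\bigO(\vert \mathcal{P} \vert^3 \cdot \vert \mathcal{P} \vert \vert \mathcal{P}' \vert) = \bigO(\vert \mathcal{P} \vert^4 \vert \mathcal{P}' \vert)$ for the minimization phase.

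Once an optimal $\mathcal{S}$ has been found, I would compute an optimal partner $\mathcal{S}'$ via Eq.~(\ref{eq:partner}); again by Proposition~\ref{prop:evaluateFrac} this costs $\bigO(\vert \mathcal{P} \vert \vert \mathcal{P}' \vert)$. Summing the three phases gives $\bigO(\vert V \vert + \vert \mathcal{P} \vert \vert \mathcal{P}' \vert) + \bigO(\vert \mathcal{P} \vert^4 \vert \mathcal{P}' \vert) + \bigO(\vert \mathcal{P} \vert \vert \mathcal{P}' \vert) = \bigO(\vert V \vert + \vert \mathcal{P} \vert^4 \vert \mathcal{P}' \vert)$, since the dominant term absorbs the lower-order ones. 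The main obstacle is conceptual rather than computational: one must invoke the $\bigO(\vert \mathcal{P} \vert^3)$ oracle-call bound specific to symmetric submodular minimization, not the much larger bound for the general case, and confirm that the returned minimizer is indeed a nontrivial subset of $\mathcal{P}$, exactly as \ctwo requires.
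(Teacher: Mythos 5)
Your proposal is correct and follows essentially the same route as the paper's own proof: precompute the distributions in $\bigO(\vert V \vert + \vert \mathcal{P} \vert \vert \mathcal{P}' \vert)$ time, invoke Queyranne's OPTIMAL-SET with its $\bigO(\vert \mathcal{P} \vert^3)$ oracle-call bound for \emph{symmetric} submodular minimization (justified by Proposition~\ref{prop:submodular}), charge $\bigO(\vert \mathcal{P} \vert \vert \mathcal{P}' \vert)$ per evaluation via Proposition~\ref{prop:evaluateFrac}, and absorb the final partner computation from Eq.~(\ref{eq:partner}) into the dominant term. Your explicit accounting of the partner step and the remark that the minimizer returned is nontrivial (as \ctwo requires) are fine refinements but do not change the argument.
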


Interestingly, the asymptotic time for computing the minimum cut basis
of $\mathcal{P}$ is lower than that for computing just one optimal
\ctwo-correspondence using symmetric submodular minimization.\footnote{The only
way the latter time could be lower than or as low as $\bigO(\vert V
\vert + \vert \mathcal{P} \vert^3 \vert \mathcal{P}' \vert + \vert
\mathcal{P} \vert^2 \vert \mathcal{P}' \vert^2)$ (see
Eq.~(\ref{firstBigO})) would entail $\vert \mathcal{P} \vert^4
\vert \mathcal{P}' \vert \leq \vert \mathcal{P} \vert^2 \vert
\mathcal{P}' \vert^2$. 
Eq.~(\ref{secondBigO}) then yields that the
minimum cut basis can be computed in time $\bigO(\vert V \vert + \vert
\mathcal{P} \vert^3 \vert \mathcal{P}' \vert)$.}

\section{Computing minimum $P_s$-$P_t$ cuts}
\label{sec:BB}
With the intent of improving the running time of the results
in the previous section for practical purposes, we continue with branch-and-bound (B\&B) 
and greedy techniques.

\subsection{Basic branch-and-bound algorithm.}
\label{subsec:BB_basicAlgo}
Let $P_s \neq P_t \in \mathcal{P}$. Our goal is to find a minimum
$P_s$-$P_t$ cut $(\mathcal{S}_s, \mathcal{S}_t)$ of $\mathcal{P}$ \wrt
$\phi_{\mathcal{P}'}(\cdot)$.
The idea be\-hind our algorithm is to first set $\mathcal{S}_s :=
\{P_s\}$, $\mathcal{S}_t := \{P_t\}$ and then let $\mathcal{S}_s$ and
$\mathcal{S}_t$ compete for the remaining parts in $\mathcal{P}$ until
$\mathcal{S}_s \cup \mathcal{S}_t = \mathcal{P}$. To curtail the
exponentially growing number of possibilities that arise when
assigning new parts, \ie parts in $\mathcal{P} \setminus
(\mathcal{S}_s \cup \mathcal{S}_t)$, to either $\mathcal{S}_s$ or
$\mathcal{S}_t$, we need a lower bound $b(\mathcal{S}_s \cup
\mathcal{S}_t)$ on how low $\phi_{\mathcal{P}'}(\mathcal{S})$ can
possibly get for $\mathcal{S}$ with $\mathcal{S}_s \subseteq
\mathcal{S}$ and $\mathcal{S} \cap \mathcal{S}_t =
\emptyset$. Proposition~\ref{prop:bound} below guarantees that the
bound defined next is admissible.

\begin{definition}
\label{def:bound}
Let $\mathcal{S}_s, \mathcal{S}_t \subseteq \mathcal{P}$ with $P_s \in
\mathcal{S}_s$, $P_t \in \mathcal{S}_t$ and $\mathcal{S}_s \cap
\mathcal{S}_t = \emptyset$. We set $b(\mathcal{S}_s, \mathcal{S}_t)
:=\sum_{P' \in \mathcal{P}'} \min\{\vert U_{\mathcal{S}_s} \cap P'
\vert, \vert U_{\mathcal{S}_t} \cap P' \vert\}$.
%
%
\end{definition}

\begin{prop}[Proof in Appendix~\ref{proof:bound}]
\label{prop:bound}
Let $\mathcal{S}_s, \mathcal{S}_t \subseteq \mathcal{P}$ with $P_s \in
\mathcal{S}_s$, $P_t \in \mathcal{S}_t$ and $\mathcal{S}_s \cap
\mathcal{S}_t = \emptyset$. Moreover, let $\mathcal{S} \supseteq
\mathcal{S}_s$, $\mathcal{S} \cap \mathcal{S}_t = \emptyset$. Then,
$b(\mathcal{S}_s, \mathcal{S}_t) \leq
\phi_{\mathcal{P}'}(\mathcal{S})$.
\end{prop}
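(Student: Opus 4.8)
The plan is to prove the inequality termwise over the parts $P' \in \mathcal{P}'$, since both $b(\mathcal{S}_s, \mathcal{S}_t)$ and $\phi_{\mathcal{P}'}(\mathcal{S})$ (via Eq.~(\ref{eq:Frac})) are sums indexed by $P' \in \mathcal{P}'$. Concretely, I would fix an arbitrary $P' \in \mathcal{P}'$ and establish
\[
\min\{\vert U_{\mathcal{S}_s} \cap P' \vert, \vert U_{\mathcal{S}_t} \cap P' \vert\} \leq \vert P' \vert \, \peak\Bigl(\frac{\vert U_{\mathcal{S}} \cap P' \vert}{\vert P' \vert}\Bigr),
\]
after which summing over all $P' \in \mathcal{P}'$ immediately yields the claim.

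First I would rewrite the right-hand side in a more convenient form. Setting $n := \vert P' \vert$ and $x := \vert U_{\mathcal{S}} \cap P' \vert$, the definition of $\peak(\cdot)$ in Eq.~(\ref{eq:peak}) gives $n \cdot \peak(x/n) = \min\{x, n - x\}$; one simply checks the two cases $x \leq n/2$ and $x > n/2$. So it suffices to show $\min\{\vert U_{\mathcal{S}_s} \cap P' \vert, \vert U_{\mathcal{S}_t} \cap P' \vert\} \leq \min\{x, n - x\}$.

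Next I would exploit two containment relations among the unions, which follow from the hypotheses together with the fact that the parts of $\mathcal{P}$ are pairwise disjoint (so the unions behave monotonically and disjointly under the set operations on the index collections). Since $\mathcal{S}_s \subseteq \mathcal{S}$, we have $U_{\mathcal{S}_s} \subseteq U_{\mathcal{S}}$ and hence $\vert U_{\mathcal{S}_s} \cap P' \vert \leq x$. Since $\mathcal{S} \cap \mathcal{S}_t = \emptyset$ and the parts are disjoint, $U_{\mathcal{S}}$ and $U_{\mathcal{S}_t}$ are disjoint subsets of $V$, so within $P'$ we get $\vert U_{\mathcal{S}} \cap P' \vert + \vert U_{\mathcal{S}_t} \cap P' \vert \leq \vert P' \vert$, i.e.\ $\vert U_{\mathcal{S}_t} \cap P' \vert \leq n - x$. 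Combining, $\min\{\vert U_{\mathcal{S}_s} \cap P' \vert, \vert U_{\mathcal{S}_t} \cap P' \vert\} \leq \vert U_{\mathcal{S}_s} \cap P' \vert \leq x$ and likewise $\leq \vert U_{\mathcal{S}_t} \cap P' \vert \leq n - x$, whence the minimum is bounded by $\min\{x, n - x\}$.

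There is no real obstacle here; the argument is short once the problem is reduced to a single $P'$. The only point that needs care is the bookkeeping of which containment goes with which hypothesis — that $\mathcal{S}_s \subseteq \mathcal{S}$ controls the $\mathcal{S}_s$-term from above, while disjointness of $\mathcal{S}$ and $\mathcal{S}_t$ controls the $\mathcal{S}_t$-term — together with the clean identity $\vert P' \vert \, \peak(x/\vert P' \vert) = \min\{x, \vert P' \vert - x\}$ that lets both bounds be compared against the same quantity.
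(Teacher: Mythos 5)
Your proposal is correct and follows essentially the same route as the paper's proof: both rewrite $\vert P' \vert \peak(\vert U_{\mathcal{S}} \cap P' \vert / \vert P' \vert)$ as $\min\{\vert U_{\mathcal{S}} \cap P' \vert, \vert P' \vert - \vert U_{\mathcal{S}} \cap P' \vert\}$ and then compare with $b(\mathcal{S}_s,\mathcal{S}_t)$ term by term over $P' \in \mathcal{P}'$. The only difference is cosmetic: the paper writes $\vert P' \vert - \vert U_{\mathcal{S}} \cap P' \vert$ as $\vert U_{\mathcal{P}\setminus\mathcal{S}} \cap P' \vert$ and leaves the final containment argument ($U_{\mathcal{S}_s} \subseteq U_{\mathcal{S}}$, $U_{\mathcal{S}_t} \subseteq U_{\mathcal{P}\setminus\mathcal{S}}$) implicit, whereas you spell it out.
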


\noindent We still have to make decisions on (i) the choice of
  the next part $P$ from $\mathcal{P} \setminus (\mathcal{S}_s \cup
  \mathcal{S}_t)$ that we use to extend either $\mathcal{S}_s$ or
  $\mathcal{S}_t$ and (ii) whether we assign $P$ to $\mathcal{S}_s$ or
  $\mathcal{S}_t$.
Our strategy for (ii) is to assign $P$ to $\mathcal{S}_s$ or
$\mathcal{S}_t$ according to the (optimistic) prospect
$b(\mathcal{S}_s, \mathcal{S}_t)$, \ie $P$ is assigned such that the
new value $b(\mathcal{S}_s, \mathcal{S}_t)$ is minimum. We prefer
minimizing $b(\mathcal{S}_s, \mathcal{S}_t)$ over minimizing
$\phi_{\mathcal{P}'}(\mathcal{S}_s)$ and/or
$\phi_{\mathcal{P}'}(\mathcal{S}_t)$ because the latter two numbers
can both be high although the prospect for finding a good cut of
$\mathcal{P}$ is still good. Due to the definition of $b(\cdot,
\cdot)$ and the symmetry of $\phi_{\mathcal{P}'}(\cdot)$, however, the
objectives of minimizing $b(\mathcal{S}_s, \mathcal{S}_t)$,
$\phi_{\mathcal{P}'}(\mathcal{S}_s)$ and
$\phi_{\mathcal{P}'}(\mathcal{S}_t)$ will have converged by the time
when $\mathcal{S}_s$ and $\mathcal{S}_t$ are fully grown, \ie
$\mathcal{S}_s \cup \mathcal{S}_t = \mathcal{P}$.
Our strategy for (i), the choice of $P$, aims at shifting
the backtracking phases of our B\&B algorithm to scenarios
in which $\mathcal{S}_s$ and $\mathcal{S}_t$ are already large and
where chances are that we are close to a new minimum of
$\phi_{\mathcal{P}'}(\cdot)$. To this end, we pick $P$ from
$\mathcal{P} \setminus (\mathcal{S}_s \cup \mathcal{S}_t)$ such that
the alternative between putting $P$ into $\mathcal{S}_s$ or into
$\mathcal{S}_t$ matters the most in terms of $b(\cdot,
\cdot)$. Formally,
\begin{equation}
\small 
\label{eq:nextP}
P = \argmax\limits_{P \in \mathcal{P}
\setminus (\mathcal{S}_s \cup \mathcal{S}_t)} \vert b(\mathcal{S}_s \cup \{P\},
\mathcal{S}_t) - 
b(\mathcal{S}_s, \mathcal{S}_t \cup \{P\}) \vert
\end{equation}

After initializing $\mathcal{S}_s$ and $\mathcal{S}_t$, our
B\&B algorithm calls $\greed(\mathcal{S}_s, \mathcal{S}_t,
\infty)$, which is shown as Algorithm~\ref{algo:BB} in Section~\ref{subsec:greedy}. In later calls of
$\greed(\mathcal{S}_s, \mathcal{S}_t, bestSoFar)$, we always have
$(\mathcal{S}_s \supsetneq \{P_s\} \vee \mathcal{S}_t \supsetneq
\{P_t\}) \wedge \mathcal{S}_s \cap \mathcal{S}_t = \emptyset$, and
$bestSoFar$ amounts to the minimum weight ($\phi_{\mathcal{P}'}$
value) of the $P_s$-$P_t$ cuts found so far (see lines 5-10 of
Algorithm~\ref{algo:BB} in Appendix~\ref{subsec:app:bb}). Crucial questions \emph{after} any call
of $\greed$ are
\begin{itemize}
\item[1)] whether $\greed(\cdot, \cdot, \cdot)$ needs to be invoked
  again and, \emph{if so},
\item[2a)] which of the most recent assignments of parts (to
  $\mathcal{S}_s$ or $\mathcal{S}_t$) should be undone when
  backtracking and 
\item[2b)] which alternative line for searching a minimum $P_s$-$P_t$
  cut is taken after backtracking, \ie what is the input for the next
  $\greed(\cdot, \cdot, \cdot)$ call.
\end{itemize}

The answer to 1) is ``as long as $\mathcal{S}_s \supsetneq \{P_s\}$ or
$\mathcal{S}_t \supsetneq \{P_t\}$''. In other words, we stop when
$(\mathcal{S}_s, \mathcal{S}_t)$ has shrunk to its initialization
$(\{P_s\}, \{P_t\})$ (see lines 3 and 23 of Algorithm~\ref{algo:BB}).
The answer to 2a) and 2b), in turn, as well as more details on our B\&B  algorithm (such as pseudocode),
can be found in Appendix~\ref{subsec:app:bb}.
How to extend the B\&B algorithm from \ctwo to \cthree and \cfour
is described in Appendix~\ref{subsubsec:extensions}.

\subsection{Speeding up the B\&B algorithm.}
\label{subsec:BB_extensions}
%
As before, let $\mathcal{S}_s$ and $\mathcal{S}_t$ consist of the parts
of $\mathcal{P}$ that have already been assigned to the $P_s$-side and
the $P_t$-side of the cut, respectively. To tighten the bound at the
current state of assembling $\mathcal{S}_s$ and $\mathcal{S}_t$, we
take a closer look at parts $P' \in \mathcal{P}'$ that overlap well
with $\mathcal{S}_s$ or $\mathcal{S}_t$ already. Specifically, let $P'
\in \mathcal{P}'$ such that $\vert P' \cap U_{\mathcal{S}_t} \vert
\geq \vert P' \vert / 2$. Then, if no backtracking behind the current
state occurs, a new assignment of some $P \in \mathcal{P}$ to the
$P_s$-side will increase the value of $\phi_{\mathcal{P}'}(\cdot)$ by
at least $I_s(P, P') := \vert P \cap P' \vert$. Exchanging the roles
of $s$ and $t$ may yield alternative increases $I_t(P, P')$ (based on
other $P'$). Thus,

\begin{equation*}
I_s(P) := \sum(I(P, P') \mbox{~:~} P' \mbox{ fulfills } \vert P' \cap U_{\mathcal{S}_t}
\vert \geq \vert P' \vert / 2)
\end{equation*}

\noindent and analogously defined $I_t(P)$ are the increases of
$\phi_{\mathcal{P}'}(\cdot)$ if $P$ is assigned to the $s$-side or
$t$-side, respectively. Hence, summing up the terms $\min\{I_s(P),
I_t(P)\}$ over all $P$ not yet assigned to any side yields a lower
bound on the future increase of the objective function.
%
Apart from improving the bound, a second way to curtail the search is
to interpret current $\mathcal{S}_s$ and $\mathcal{S}_t$ as two
\ctwo-correspondences $(\mathcal{S}_s, \mathcal{S}_s')$ and
$(\mathcal{S}_t, \mathcal{S}_t')$, where $\mathcal{S}_s'$ and
$\mathcal{S}_t'$ are optimal partners of $\mathcal{S}_s$ and
$\mathcal{S}_t$, respectively.

\subsection{Greedy heuristic.}
\label{subsec:greedy}
  Algorithm~\ref{algo:greedy}, $\greed$ $(\mathcal{S}_s, \mathcal{S}_t, bestSoFar)$, is at the heart of our B\&B 
  algorithm. It greedily extends a pair $(\mathcal{S}_s,
  \mathcal{S}_t)$ and terminates prematurely, \ie with $\mathcal{S}_s
  \cup \mathcal{S}_t \neq \mathcal{P}$, if there is no chance to find
  $\mathcal{S}$ with $\phi_{\mathcal{P}'}(\mathcal{S}) < bestSoFar$.
In the first call of $\greed(\mathcal{S}_s, \mathcal{S}_t, bestSoFar)$,
we have $\mathcal{S}_s = \{P_s\}$, $\mathcal{S}_s = \{P_s\}$ and
$bestSoFar = \infty$. In particular, $\greed(\{P_s\}, \{P_t\},
\infty)$ does \emph{not} end prematurely, \ie it delivers a
$P_s$-$P_t$ cut $(\mathcal{S}, \mathcal{P} \setminus \mathcal{S})$.
While $\greed$ does not guarantee optimality, it will be interesting
if its quality is acceptable in practice.

\begin{algorithm}[tb]
  \caption{Algorithm $\greed(\mathcal{S}_s, \mathcal{S}_t, bestSoFar)$
    for extending a pair $(\mathcal{S}_s, \mathcal{S}_t)$ with
    $\mathcal{S}_s \cap \mathcal{S}_t = \emptyset$ towards a pair
    $(\mathcal{S}, \mathcal{P} \setminus \mathcal{S})$ with
    $\mathcal{S} \supseteq \mathcal{S}_s$ and $\mathcal{S} \cap
    \mathcal{S}_t = \emptyset$ as long as there is a chance that
    $\phi_{\mathcal{P}'}(\mathcal{S}) < bestSoFar$}
  \label{algo:greedy}
  \begin{algorithmic}[1]
    \While{$(\mathcal{S}_s \cup \mathcal{S}_t \neq \mathcal{P})
      \wedge b((\mathcal{S}_s, \mathcal{S}_t) < bestSoFar)$}
    \State Find $P \in \mathcal{P} \setminus (\mathcal{S}_s \cup
    \mathcal{S}_t)$ that fulfills Eq.~(\ref{eq:nextP})
    \If{$b(\mathcal{S}_s \cup \{P\}, \mathcal{S}_t) <
      b(\mathcal{S}_s, \mathcal{S}_t \cup \{P\})$}
    \State $\mathcal{S}_s \gets \mathcal{S}_s \cup \{P\}$
    \Else
    \State $\mathcal{S}_t \gets \mathcal{S}_t \cup \{P\}$
    \EndIf
    \EndWhile
  \end{algorithmic}
\end{algorithm}

%
%
\section{Correspondences in community detection}
\label{sec:real-sim}
In the experiments of this section we
not only evaluate the performance of our B\&B and $\greed$ algorithms,
but also gain insight into different variants of the Louvain method, a
community detection algorithm. Community detection is a graph
clustering problem well-known in (social) network
analysis~\cite{Fortunato201075}, resulting in a partition of the
graph's node set. Finding correspondences between communities is
challenging when the nodes come without attributes that could help
with the task (like colored pixels in images), which is the case here.

\subsection{Louvain method (LM) and variants.}
\label{subsec:LM}
LM~\cite{Blondel:2008uq} is a locally greedy, bottom-up multilevel
algorithm. It is very popular for community detection by maximizing
the objective function modularity (this
modularity~\cite{girvan2002community} should not be confused with the
one in Definition~\ref{def:submodular}).  On each hierarchy level, LM
assigns nodes to communities iteratively, while
maximizing modularity greedily.
The communities on each level are contracted into single
nodes, giving rise to the graph on the next level. The solution of
the coarsest graph is then successively expanded to the next finer
level, respectively.
In~\cite{DBLP:journals/tpds/StaudtM16} a shared-memory parallelization
of LM, called PLM, is provided.
PLM is not deterministic since the outcome depends on the order of the threads. We
denote the single-threaded (sequential) version of PLM by SLM. Both
versions have been extended by an optional \emph{refinement} phase:
after each expansion, nodes are again moved for modularity gain. SLM
with refinement is denoted by SLMR.

\subsection{Research questions and approach.}
Let now $\mathcal{P}$ and $\mathcal{P}'$ be from SLM and SLMR,
  respectively. We first want to know whether the transition from
  $\mathcal{P}$ to $\mathcal{P}'$ is best described as (a) communities
  merely exchanging elements with each other, but otherwise remaining
  as they are or (b) involving unions and break-ups of communities. An
  analogous question arises when $\mathcal{P}$ and $\mathcal{P}'$ are
  the partitions returned by different (non-deterministic) runs of
  PLM.

Second, we want to test if the choice between \ctwo- and
   \cthree-correspondences matters when comparing communities.
Third, we want to evaluate the tradeoff between quality and running time 
 for both the B\&B algorithm and the heuristic $\greed$.

To answer the first question, we use the $\vert \mathcal{P} \vert - 1$
best \ctwo-correspondences: if the
communities merely exchange elements with each other, most
\ctwo-correspondences $(\mathcal{C}, \mathcal{C}')$ should be such
that $\vert \mathcal{C} \vert = \vert \mathcal{C}' \vert$. Conversely,
unions and break-ups of communities should result in many instances
with $\vert \mathcal{C} \vert \neq \vert \mathcal{C}' \vert$.



To answer the second and the third question, we compute the
$\vert \mathcal{P} \vert - 1$ best \ctwo- and the $\vert \mathcal{P}
\vert - 1$ best \cthree-correspondences with the
B\&B algorithm and the heuristic $\greed$, respectively. For each of these
  four scenarios, we then aggregate the $\vert \mathcal{P} \vert - 1$
best correspondences by calculating what can be called \emph{total
  dissimilarity}, \ie the sum of the $\phi_{\mathcal{P}'}(\cdot,
\cdot)$ values of the $\vert \mathcal{P} \vert - 1$ best
correspondences, divided by the total number of vertices.
In experiments involving (non-deterministic) PLM, we smooth total dissimilarity by averaging over 10 runs.

We compare the results from the four scenarios as follows. Let
  $d_{\mathcal{P}}$, $d_{\wedge}$, $d^h_{\mathcal{P}}$ and
  $d^h_{\wedge}$ be the total dissimilarity from the scenarios (i)
  \ctwo, B\&B, (ii) \cthree, B\&B, (iii)
  \ctwo, heuristic, and (iv) \cthree, heuristic, respectively. We form
  the ratios $r_1 = d_{\mathcal{P}}/d_{\wedge}$, $r_2 =
  d_{\mathcal{P}}/d^h_{\mathcal{P}}$ and $r_3 =
  d_{\mathcal{P}}/d^h_{\wedge}$. In case of experiments
  involving PLM, different scenarios come with different sets of ten PLM-generated partitions each.
%
%
As input for PLM, SLM, and SLMR we choose a collection of 15 diverse
and widely used complex networks from two popular
archives~\cite{Bader2017a,Leskovec2014a}. These networks are listed
in Table~\ref{tab:extra_social}. For each network and
each comparison, \ie one run of PLM \vs another run of PLM or SLM \vs
SLMR, we get a pair of partitions $\mathcal{P}$ and $\mathcal{P}'$. 

Note that we are not aware of comparable many-to-many correspondences approaches,
so that a comparison to existing methods has to be omitted.
Our sequential code implementing the algorithms presented in
  Section~\ref{sec:BB} is written in C++; it uses the LM implementations
  of NetworKit~\cite{Staudt2014}.
  



\vspace{-2.5ex}
\begin{center}
\begin{table*}[ht]
\caption{Complex networks used for comparing partitions. The column
  {\small \emph{\# communities}} indicates the average number of communities
  generated by PLM (average over 20 runs).}
\label{tab:extra_social}
\begin{center}
\begin{small}
\scalebox{0.78}{
\begin{tabular}{ l | l | r | r | r | r }
Graph ID & Name & \#vertices & \#edges & \# communities & Network type\\ \hline \hline
 1 & \textsc{p2p-Gnutella}          & \numprint{6405}   & \numprint{29215}     & {12.7}    & filesharing network\\\hline
 2 & \textsc{PGPgiantcompo}         & \numprint{10680}  & \numprint{24316}     & {95.7}    & network of PGP users\\\hline
 3 & \textsc{email-EuAll}           & \numprint{16805}  & \numprint{60260}     & {48.4}    & network of connections via email\\\hline
 4 & \textsc{as-22july06}           & \numprint{22963}  & \numprint{48436}     & {26.1}    & autonomous systems in the internet\\\hline
 5 & \textsc{soc-Slashdot0902}      & \numprint{28550}  & \numprint{379445}    & {144.4}    & news network\\\hline
 6 & \textsc{loc-brightkite\_edges} & \numprint{56739}  & \numprint{212945}    & {264.7}    & location-based friendship network\\\hline
 7 & \textsc{loc-gowalla\_edges}    & \numprint{196591} & \numprint{950327}    & {509.7}    & location-based friendship network\\\hline
 8 & \textsc{coAuthorsCiteseer}     & \numprint{227320} & \numprint{814134}    & {181.5}    & citation network\\\hline
 9 & \textsc{wiki-Talk}             & \numprint{232314} & \numprint{1458806}   & {632.3}    & user interactions through edits\\\hline
 10 & \textsc{citationCiteseer}      & \numprint{268495} & \numprint{1156647}  & {124.8}    & citation network\\\hline
 11 & \textsc{coAuthorsDBLP}         & \numprint{299067} & \numprint{977676}   & {181.7}    & citation network\\\hline
 12 & \textsc{web-Google}            & \numprint{356648} & \numprint{2093324}  & {159.0}    & hyperlink network of web pages\\\hline
 13 & \textsc{coPapersCiteseer}      & \numprint{434102} & \numprint{16036720} & {266.9}    & citation network\\\hline
 14 & \textsc{coPapersDBLP}          & \numprint{540486} & \numprint{15245729} & {146.2}    & citation network\\\hline
 15 & \textsc{as-skitter}            & \numprint{554930} &\numprint{5797663}   & {226.8}    & network of internet service providers\\\hline
\end{tabular}}
\end{small}
\end{center}
\end{table*}
\end{center}
\begin{figure*}[bht]
\begin{center}
\subfloat[]{
\includegraphics[width=0.22\textwidth]{./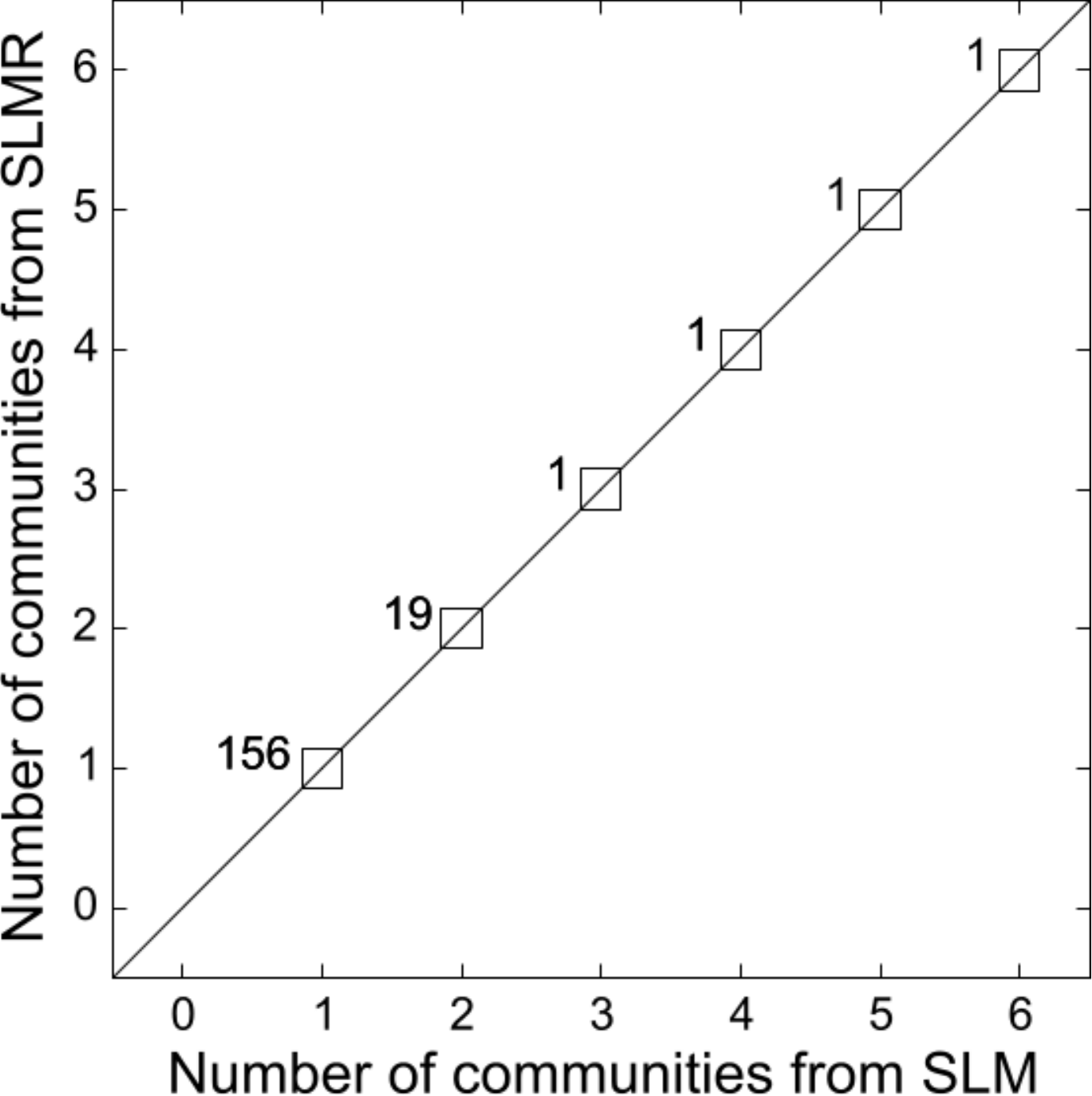}
}\qquad
\subfloat[]{
\includegraphics[width=0.22\textwidth]{./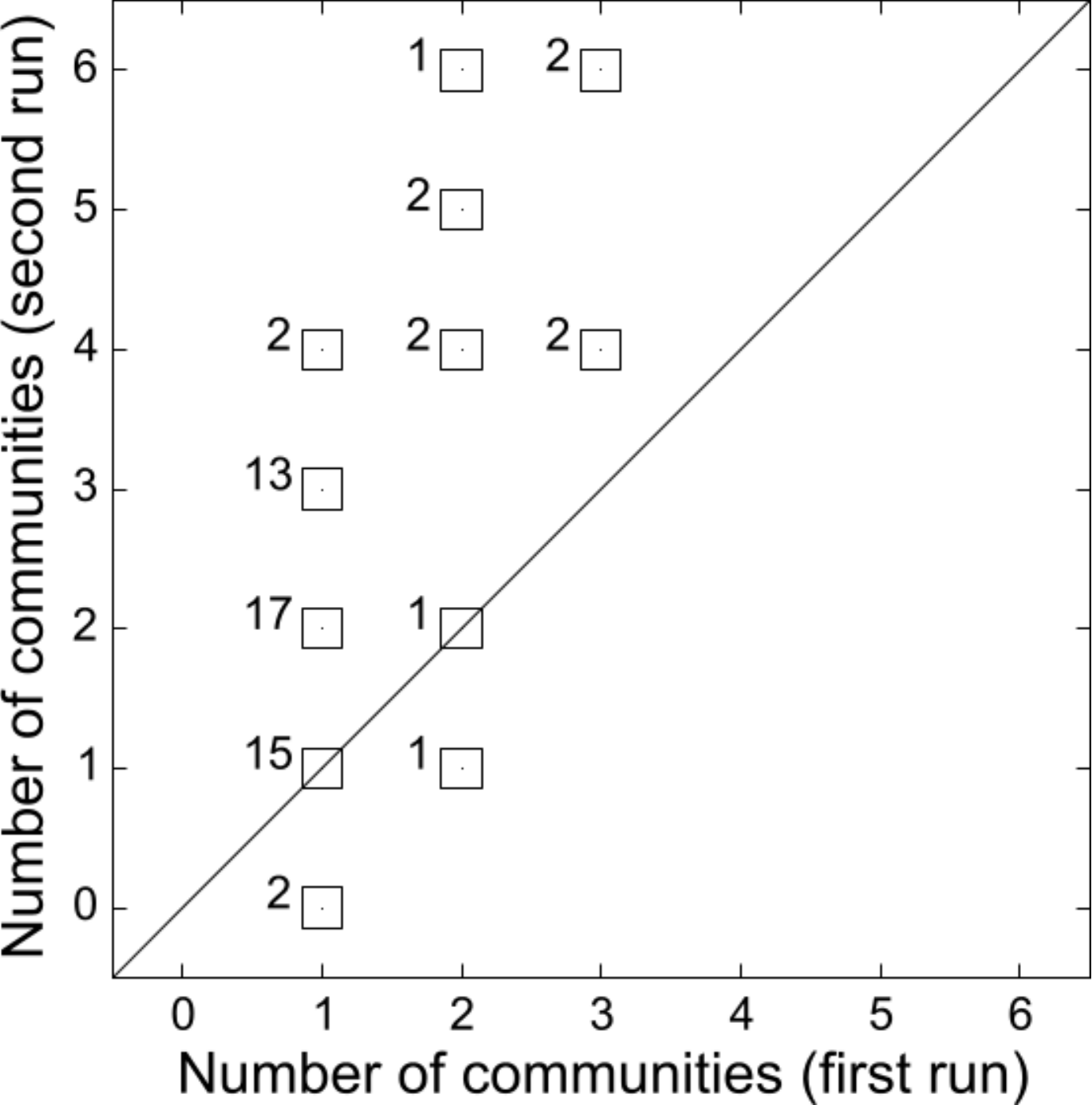}
}\qquad
\subfloat[]{
\includegraphics[width=0.23\textwidth]{./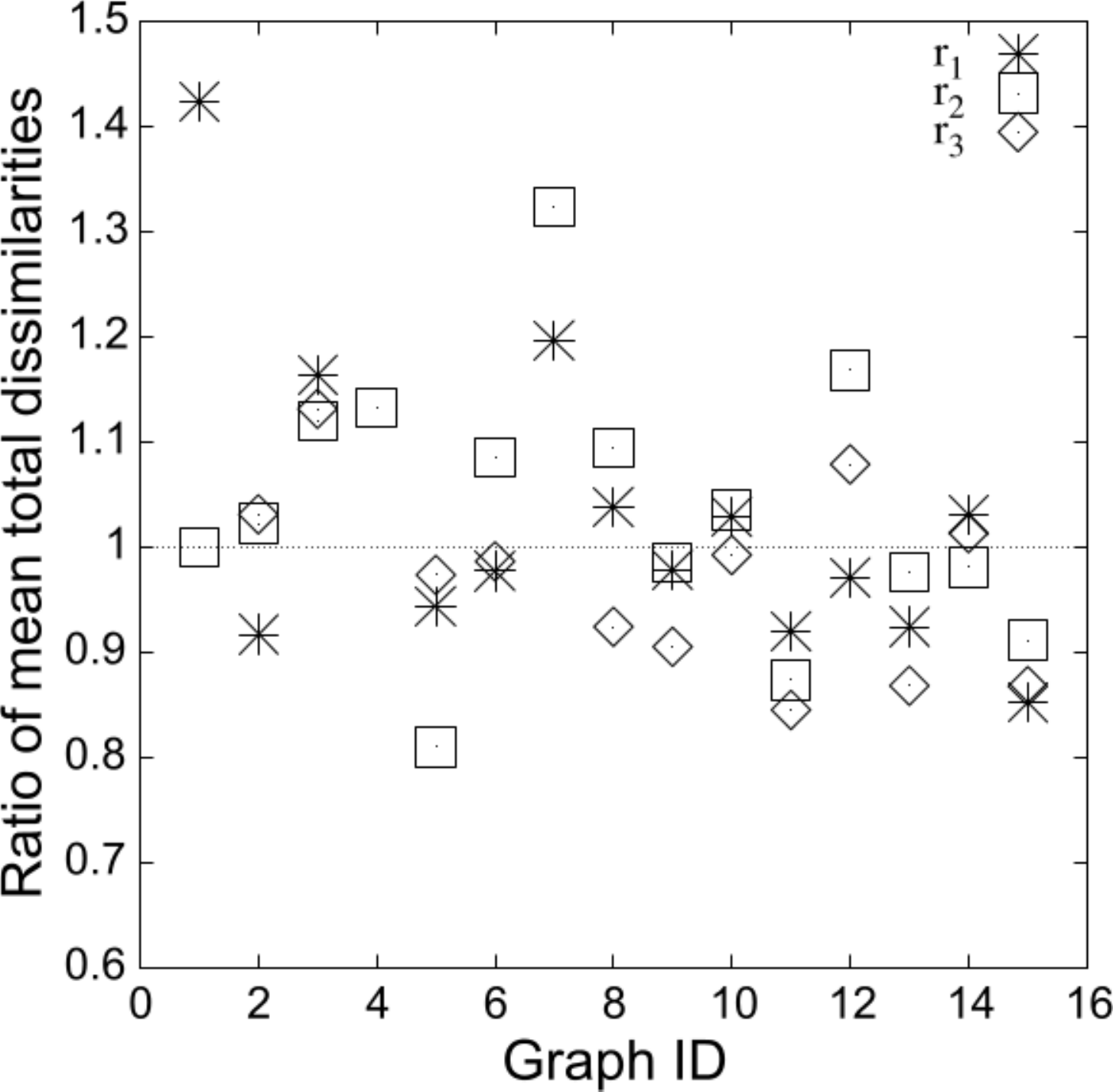}
}
\caption{A data point $(n, n')$ indicates that there are
  \ctwo-correspondences between partitions $\mathcal{P}$ and
  $\mathcal{P}'$, where $n$ communities of $\mathcal{P}$ correspond to
  $n'$ communities of $\mathcal{P}'$. The number at a data point $(n,
  n')$ indicates the number of \ctwo-correspondences $(\mathcal{S},
  \mathcal{S}')$ with $(\vert \mathcal{S} \vert, \vert \mathcal{S}'
  \vert) = (n, n')$. PLM, SLM and SLMR are applied to the graph \web.
  (a) Pairs from SLM \vs SLMR. (b)
  Pairs from different runs of PLM. The pairs not shown are (7,12) and
  (9,19) with one occurrence each. (c) Fluctuation of mean total
  dissimilarities of correspondences over 10 different runs of PLM \vs
  PLM seems to be due to the non-determinism of PLM. 
  }
\label{fig:web}
\end{center}
\vspace{-2ex}
\end{figure*}
%
\subsection{Results.}

For any network in Table~\ref{tab:extra_social}, all $\vert \mathcal{P} \vert - 1$ best
\ctwo-cor\-res\-pon\-den\-ces $(\mathcal{S}, \mathcal{S}')$ between
partitions from runs of SLM and SLMR (both deterministic)
fulfill $\vert \mathcal{S} \vert = \vert \mathcal{S}' \vert$. 
(Figure~\ref{fig:web}(a) shows the corresponding result for \textsc{web-Google}.)
This indicates that, between SLM and SLMR, the communities merely exchange
elements with each other and that there are no unions and no
  break-ups of communities.  Calculations of $r_1$, $r_2$ and $r_3$
for SLM \vs SLMR and all networks yield values between $1.0$ and
$1.024$. Thus, none of the choices, \ie B\&B \vs heuristic
and \ctwo \vs \cthree, has considerable impact on quality.
Figure~\ref{fig:web}(b) shows the results of analogous experiments with
PLM \vs PLM instead of SLM \vs SLMR. In contrast to Figure~\ref{fig:web}(a), numerous
\ctwo-correspondences $(\mathcal{S}, \mathcal{S}')$ are unbalanced in
that $\vert \mathcal{S} \vert$ and $\vert \mathcal{S}' \vert$ differ
considerably. This indicates that the non-determinism of PLM causes
unions and break-ups of communities.

Figure~\ref{fig:web}(c) shows that the fluctuations of total
  dissimilarity (after some averaging) do not follow any trend in
  terms of the four scenarios. Since the B\&B algorithm
  cannot perform worse than the corresponding heuristic on a given
  partition, this indicates that the fluctuations are due to the
  non-determinism of PLM, and that the heuristics are as good as the
  corresponding B\&B algorithm.

Running times of our B\&B algorithm fluctuate considerably, \eg
between 38 and 9555 seconds in ten runs for the graph
\textsc{wiki-Talk}.  Minimum, maximum and mean running times for
graphs in Table~\ref{tab:extra_social} are shown in
Table~\ref{tab:running_BB} (Appendix~\ref{sec:app:run}). Recall that running times refer to
computing the best $\vert \mathcal{P} \vert -1$ correspondences. Not
surprisingly, running times tend to increase enormously with
increasing numbers of communities, despite the strong fluctuations.
In the vast majority of cases, however, the B\&B algorithm terminates
within a few minutes, even for the larger instances.
The analogue running times of our greedy heuristic are much more
stable, never exceeding 40 seconds; for details see
Table~\ref{tab:running_greed} (Appendix~\ref{sec:app:run}). As expected, due to the absence of
backtracking, the trend toward higher running times for increasing
numbers of communities is less pronounced than for the B\&B algorithm.
Nonetheless, as mentioned above, the aggregated quality ($r_i \in
[1.0; 1.024]$) shows that $\greed$ yields very good results already.



To summarize, non-de\-ter\-mi\-nism of PLM disrupts the communities in
a more fundamental way (frequent unions or break-ups of communities)
than the refinement phase. 
Also, the choices (i) B\&B \vs
heuristic and (ii) \ctwo \vs \cthree have a minor impact on the
quality of the correspondences. Most of the time the B\&B algorithm is
fast (less than one minute), but outliers with running times of a few
hours do exist. In the context of community detection, however, it
suffices to run $\greed$, which yields very good correspondences
quickly in all cases. 
Another option would be to terminate the
B\&B algorithm after a certain amount of time, taking the best result
found. 

\section{Related work}
\label{sec:app:soa}

\subsection{Similarity measures for partitions.}
Wagner and Wagner~\cite{Wagner2007a} provide a comprehensive
collection of similarity measures for partitions $\mathcal{P}$ and $\mathcal{P}'$ of the same set $V$. They can
all be derived from the contingency table of $\mathcal{P}$ and
$\mathcal{P}'$.
Ref.~\cite{Wagner2007a} groups the
similarity measures into three groups:

\begin{enumerate}
\item Measures based on considering all unordered pairs $\{v, w\}$ of $V$ and counting the 4 cases arising from the
  distinction as to whether $v$ and $w$ belong to the same part or to
  different parts of $\mathcal{P}$ and the analogous distinction with
  $\mathcal{P}'$ instead of $\mathcal{P}$. Examples of such measures
  are the Rand index~\cite{Rand1971a} and the adjusted Rand
  index~\cite{Hubert1985a}.
\item Measures that involve a sum over maximum $P_i$, $P'_j$ overlaps,
  where the sum is over the $P_i$, the maximum is over the $P'_j$, and
  the overlaps are defined in various ways. One example is the
  $\mathcal{F}$-measure~\cite{Larsen1999a,Fung2008a}. Typically,
  theses measures yield different results if the roles of
  $\mathcal{P}$ and $\mathcal{P}'$ are exchanged. The set function
  $\phi_{\mathcal{P}'}(\cdot)$ defined in this paper, see
  Eqs.~(\ref{eq:Frac}) and~(\ref{eq:peak}), has similar
  properties in that it (i) aggregates $P_i$, $P'_j$ overlaps over
  certain $P_i$ in a nonlinear way, and (ii) may vary if the roles of
  $\mathcal{P}$ and $\mathcal{P}'$ are exchanged.
\item Measures that involve mutual information, \eg Normalized Mutual
  Information~\cite{Strehl2003a}. Here, the common ground with our
  approach to defining correspondences is that we can replace the
  function $\peak(\cdot)$, see Eqs.~(\ref{eq:Frac})
  and~(\ref{eq:peak}), by the binary entropy function without altering
  the nature of our optimization problem.
\end{enumerate}

\subsection{Impurity measures.}
The value $\phi_{\mathcal{P}'}(\mathcal{S})$ indicates
how well the parts of $\mathcal{P}'$ fit into $U_{\mathcal{S}}$ or $V
\setminus U_{\mathcal{S}}$, see Eqs.~(\ref{eq:Frac})
and~(\ref{eq:peak}). Impurity measures, as defined
in~\cite{Tan2005a,Simovici2002a}, seem to be based on a similar
idea. Using our setting and notation, Simovici
\etal~\cite{Simovici2002a} define the impurity of a subset $L$ of the
ground set $V$ relative to $\mathcal{P}$ and generated by
$\peak(\cdot)$ as
%
$\IMP^{\peak}_{\mathcal{P}'}(L) = \vert L \vert \sum_{P' \in \mathcal{P}'} \peak(\frac{\vert L \cap P' \vert}{\vert L \vert})$.
 
We can turn $\IMP^{\peak}_{\mathcal{P}'}(U_{\mathcal{S}})$ into
$\phi_{\mathcal{P}'}(\mathcal{S})$ by (i) pulling
$U_{\mathcal{S}}$ under the sum (mathematically correct) and (ii)
exchanging the roles of $U_{\mathcal{S}}$ and $P'$ under the sum
(mathematically incorrect). For us it is important to have the roles
as they are in $\phi_{\mathcal{P}'}(\mathcal{S})$ because
this is what makes $\phi_{\mathcal{P}'}(\cdot)$ a
submodular and symmetric function. These properties, in turn, make it
possible to find the best nontrivial $\mathcal{S}$ in polytime.

Despite this mismatch between $\IMP^{\peak}_{\mathcal{P}'}(\cdot)$ and
$\phi_{\mathcal{P}'}(\cdot)$, studying
$\IMP^{\peak}_{\mathcal{P}'}(\cdot)$ helped to develop the intuition
behind our approach.
The main properties of $\IMP^{\peak}_{\mathcal{P}'}(\cdot)$ and
related measures, as formulated and proven in~\cite{Simovici2002a},
are preserved if $\peak(\cdot)$ is replaced by another
\emph{generator}, as defined in~\cite{Simovici2002a}, \ie another
concave and subadditive function $f : [0,1] \mapsto \RR$ with $f(0) =
f(1) = 0$. Likewise, if we replace $\peak(\cdot)$ in
Eq.~(\ref{eq:Frac}) by another generator, we will arrive at
similar definitions of correspondences. This also does not change the
kind and asymptotic complexity of the optimization problems posed by
our approach.

%
%
\section{Conclusions and outlook}
Recall that small data changes can lead clustering methods to split
or merge clusters. By computing many-to-many correspondences,
one can recover the most crucial split and merge operations.
Here, \ctwo-correspondences are ideal
in that the many-to-many correspondences and the associated split and
merge operations make up a hierarchy.
For \ctwo-correspondences there exists a minimum basis of
\emph{non-cros\-sing} $P_s$-$P_t$ cuts of $\mathcal{P}$ \wrt $\mathcal{P}'$ that, in turn, yield a hierarchy of the $\vert
\mathcal{P} \vert - 1$ best correspondences between $\mathcal{P}$ and
$\mathcal{P}'$ via optimal partners.

Under \cthree,
the cuts in a minimum cut basis
are crossing in general. 
On the upside and in contrast to
\ctwo-correspondences, 
a good \cthree-correspondence gives rise to two
similar cuts $(U_{\mathcal{S}}, U_{\mathcal{P} \setminus
  \mathcal{S}})$ and $(U_{\mathcal{S}'}, U_{\mathcal{P}' \setminus
  \mathcal{S}'})$ of $V$. For such a pair of similar cuts it is
easy to find a cut that mediates between them. The overlay
of $k$ such medial cuts then results in a consensus partition.


In our B\&B algorithm, one has to choose the next candidates for
extension of either $\mathcal{S}_s$ or $\mathcal{S}_t$. This choice
may also involve application-specific criteria such as color or shape
in image analysis. Such additional information may help our B\&B
algorithm to stay in the lane, and is expected to accelerate it.

We see our B\&B algorithm as a starting point for 
fast heuristics to find high-quality correspondences. In the
  experiments of Section~\ref{sec:real-sim} we have seen that turning
  off the backtracking in our B\&B algorithm (\ie running $\greed$ only once)
  has only a negligible effect on the quality of the results. 

\begin{small}
\vspace{1.5ex}
\textbf{Acknowledgements.}
We thank Christian Staudt for helpful discussions and the anonymous
reviewers for helping to improve the paper in various respects.
\end{small}


\bibliographystyle{abbrv}
\bibliography{roland,refs-parco}

%
%

\clearpage

\appendix
\section{Appendix}
\label{sec:app}
%

\subsection{Correspondences in image analysis.}
\label{subsec:app:image analysis}

\pagenumbering{roman}
\setcounter{page}{1}

Scenarios in which it makes sense to compare
$\mathcal{P}$ and $\mathcal{P}'$ using correspondences can be as follows:
(i) $\mathcal{P}$ is the result of a segmentation algorithm and
  $\mathcal{P}'$ describes ground truth, \eg if $\mathcal{P}$ is a
  segmented satellite image and if $\mathcal{P}'$ describes land use
  that has been determined in the field by experts. Here, the aim of a
  comparison might be to identify areas where $\mathcal{P}$ suffers
  from over-segmentation (unions of regions of $\mathcal{P}$ that
  correspond well to single regions of $\mathcal{P}'$), from
  under-segmentation (single regions of $\mathcal{P}$ that
  correspond well to unions of regions of $\mathcal{P}'$) or
  more intricate combinations of over-segmentation and
  under-segmentations.
 
(ii) $\mathcal{P}$ and $\mathcal{P}'$ describe ground truth at
  different times. Sticking to land use, a good
  correspondence $(\mathcal{S}, \mathcal{S}')$ with $\vert \mathcal{S}
  \vert, \vert \mathcal{S}' \vert > 1$ may indicate crop rotation.

(iii) $\mathcal{P}$ and $\mathcal{P}'$ are results of different
  segmentation algorithms applied to the same image, and/or the two
  segmentations are based on different physical measurements, \eg
  channels in Satellite Imagery or CT \vs MRI in medical
  imaging. Then, a good correspondence $(\mathcal{S}, \mathcal{S}')$
  provides strong evi\-dence that the feature described by $\mathcal{S}$
  is not an artifact. For an example of correspondences between
  different segmentations see Figure~\ref{fig:sheep}.


\FloatBarrier

\subsection{Examples of cuts and correspondences.}
\label{sub:exmp-cuts-corres}
Illustration for example from Section~\ref{sec:problem-statement}:
\begin{figure}[h!]
\begin{center}
\includegraphics[width=0.9\columnwidth]{./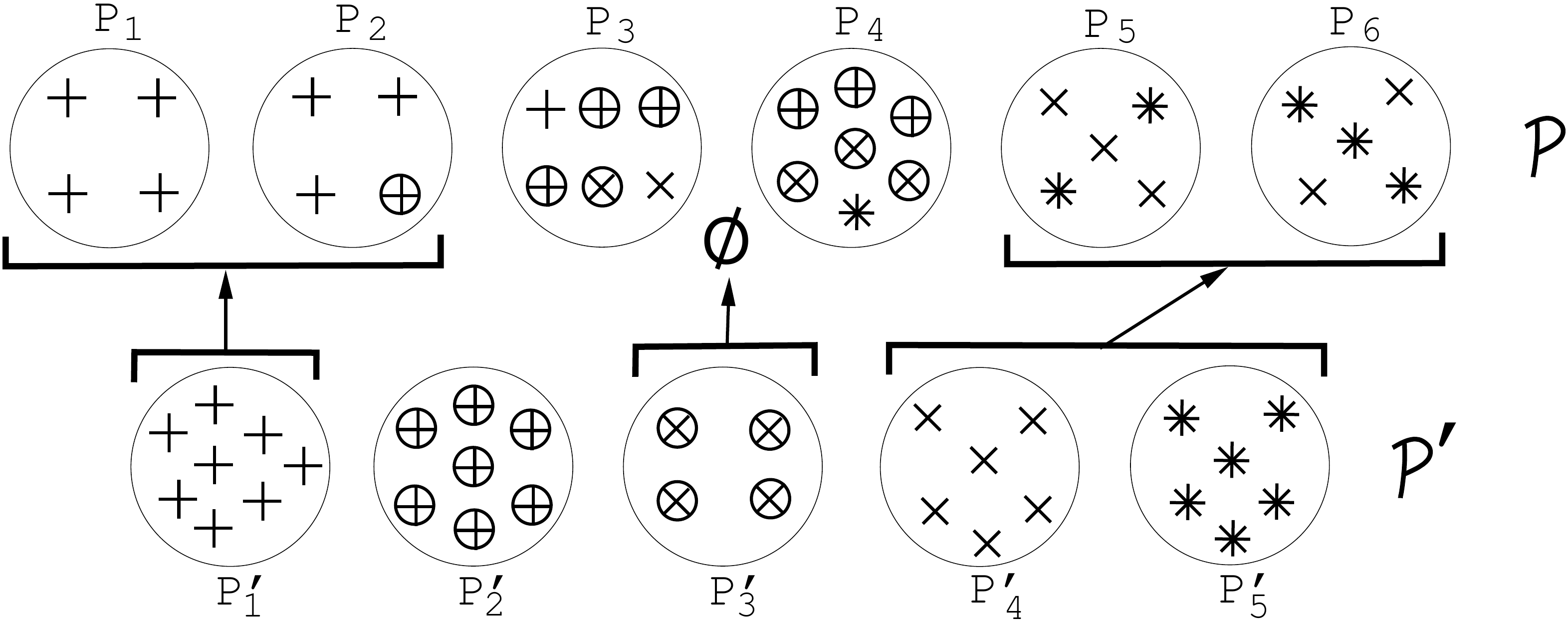}
\caption{Same scenario as in Figure~\ref{fig:example} with the roles of
  $\mathcal{P}$ and $\mathcal{P}'$ exchanged.}
\label{fig:revExample}
\end{center}
\end{figure}
\vspace{-4ex}

\subsection{\cone-correspondences.}
\label{sec:app:goodC1}
We show how a natural collection of $\vert \mathcal{P} \vert + \vert
\mathcal{P}' \vert - 1$ best \cone-correspondences emerges from a
minimum cut basis of an edge-weighted bipartite graph $G=(\mathcal{P}
\sqcup \mathcal{P}', E, \omega(\cdot))$, where the edge set $E$
consists of all $\{P, P'\}$ with $P \in \mathcal{P}$, $P' \in
\mathcal{P}'$ and $P \cap P' \neq \emptyset$. The edge weights are
given by $\omega(\{P, P'\}) = \vert P \cap P' \vert$. The cuts in the
basis may be chosen such that they are non-crossing, see
Section~\ref{subsec:bipartite}.  Time complexities for
finding good \cone-cor\-res\-pon\-den\-ces are discussed in
Section~\ref{subsec:goodC1_running_times}.

\subsubsection{\cone-correspondences from cuts of
a bipartite graph}
\label{subsec:bipartite}
Recall that \cone: $\mathcal{S} \notin \{\emptyset, \mathcal{P}\} \vee
\mathcal{S}' \notin \{\emptyset, \mathcal{P}'\}$ is our weakest
constraint. It merely excludes trivial and very bad correspondences.
Finding good correspondences through finding small cuts of certain
bipartite graphs has already been proposed in the context of mutual
document and word clustering~\cite{Zha2001a}. Here, we start by
rewriting $\phi(\mathcal{S}, \mathcal{S}')$.
\begin{align*}
\phi(\mathcal{S}, \mathcal{S}') &= \vert
U_{\mathcal{S}} \setminus U_{\mathcal{S}'} \vert + \vert
U_{\mathcal{S}'} \setminus U_{\mathcal{S}} \vert \nonumber\\ &= \vert
U_{\mathcal{S}} \cap (V \setminus U_{\mathcal{S}'})\vert + \vert
U_{\mathcal{S}'} \cap (V \setminus U_{\mathcal{S}})\vert
\nonumber\\ &= \sum_{P' \notin \mathcal{S}'} \vert U_{\mathcal{S}}
\cap P'\vert + \sum_{P \notin \mathcal{S}} \vert U_{\mathcal{S}'} \cap
P \vert \nonumber\\ &= \sum_{P' \notin
  \mathcal{S}'}(\sum_{P \in \mathcal{S}} \vert P \cap P' \vert) + \sum_{P \notin
  \mathcal{S}}(\sum_{P' \in \mathcal{S}'} \vert P \cap P' \vert).
\label{eq:bipart}
\end{align*}

Let $G=(W, E, \omega(\cdot))$ with $\omega:E \mapsto \RR_{\geq 0}$ be
the edge-weighted bipartite graph defined by (i) $W := \mathcal{P}
\sqcup \mathcal{P}'$, where $\sqcup$ denotes the disjoint union, (ii)
$E := \{\{P, P'\} \mbox{~with~} P \in \mathcal{P}, P' \in \mathcal{P}'
\mbox{~and~} P \cap P' \neq \emptyset\}$ and (iii) $E := \{\{P, P'\}
\mbox{~with~} P \in \mathcal{P}, P' \in \mathcal{P}' \mbox{~and~} P
\cap P' \neq \emptyset\}$ and $\omega(\{P, P'\}) := \vert P \cap P'
\vert$,
where we distinguish between $P \in \mathcal{P}$ and $P' \in
\mathcal{P}'$, even if $P = P'$. Then, $\phi(\mathcal{S},
\mathcal{S}')$ equals the total weight of the cut $(\mathcal{S} \sqcup
\mathcal{S}', (\mathcal{P} \setminus \mathcal{S}) \sqcup (\mathcal{P}'
\setminus \mathcal{S}'))$.

Thus, a minimum cut basis of $G$ gives rise to a minimum basis of
\cone-correspondences. A minimum cut basis of $G$, in turn, can be
chosen such that the cuts are non-crossing~\cite{Gomory1961a}. Hence,
a minimum basis of \cone-correspondences can be represented by a
Gomory-Hu tree with vertex set $W$. 

\subsubsection{Asymptotic time for minimum cut basis of \cone-correspondences}
\label{subsec:goodC1_running_times}
To build a minimum cut basis of \cone-correspondences, we first
generate the bipartite graph $G$. To this end, we compute the
contingency table of $\mathcal{P}$ and $\mathcal{P}'$ (weighted
adjacency matrix of $G$), \ie the matrix whose entry at $(i,j)$ equals
$\vert P_i \cap P'_j \vert$. Initializing the contingency table to
zero entries takes time $\bigO(\vert \mathcal{P} \vert \vert
\mathcal{P}' \vert)$. The contingency table can then be filled in one
traversal of $V$, provided that deciding on the membership of any $v
\in V$ to a part in $\mathcal{P}$ and $\mathcal{P}'$ takes constant
time. Asymptotic time for computing the contingency table \emph{and}
building $G$ is the same as for just computing the contingency table,
\ie $\bigO(\vert V \vert + \vert \mathcal{P} \vert \vert \mathcal{P}'
\vert)$.

Then, based on $G$ and using the algorithms by Gomory or
Gusfield~\cite{Gomory1961a,Gusfield90a}, one can compute a minimum cut
basis of $G$. The asymptotic time of both algorithms amounts to that
of $\vert W \vert - 1$ calculations of minimum $Q$-$R$ cuts, $Q, R \in
W$. Given $Q, R \in W$, a minimum $Q$-$R$ cut can be found in
$\bigO(\vert W \vert \vert E \vert)$ time using an algorithm
in~\cite{Orlin2013a}, which also works for general $G$. Alternatively,
one can use an algorithm in~\cite{Ahuja1994a} which finds a minimum
$Q$-$R$ cut of $G$ in time $\bigO(\mu \vert E \vert \log(2 + \mu^2 /
\vert E \vert))$, where $\mu = \min\{\vert \mathcal{P} \vert, \vert
\mathcal{P}' \vert\}$. The latter algorithm makes use of $G$ being
bipartite and can yield a lower asymptotic time than the former if (i)
$(\vert \mathcal{P} \vert \ll \vert \mathcal{P}' \vert) \vee (\vert
\mathcal{P}' \vert \ll \vert \mathcal{P} \vert)$ and (ii) $G$ is
sparse. Proposition~\ref{prop:running-nontrivial:G} summarizes the
running times of the two algorithms and expresses them in our terms,
\ie $V$, $\mathcal{P}$ and $\mathcal{P}'$.

\begin{prop}
\label{prop:running-nontrivial:G}
A minimum cut basis of $G$ can be computed in $\bigO(\vert V \vert +
\vert \mathcal{P} \vert^3 \vert \mathcal{P}' \vert + \vert \mathcal{P}
\vert \vert \mathcal{P}' \vert^3)$ or in
\begin{equation*}
\bigO(\vert V \vert + \vert \mathcal{P} \vert^2 \vert \mathcal{P}'
\vert^2 \log (2 + \frac{(\min\{\vert \mathcal{P} \vert, \vert
  \mathcal{P}' \vert\})^2}{\max\{\vert \mathcal{P} \vert, \vert
  \mathcal{P}' \vert\}})).
\end{equation*}
\end{prop}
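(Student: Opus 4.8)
The plan is to combine the cost of constructing $G$ with the cost of a Gomory--Hu computation on $G$, instantiated with two different maximum-flow routines. First I would recall from the preceding paragraph that building $G$ together with its contingency table costs $\bigO(\vert V \vert + \vert \mathcal{P} \vert \vert \mathcal{P}' \vert)$, and that $G$ has $\vert W \vert = \vert \mathcal{P} \vert + \vert \mathcal{P}' \vert$ vertices and $\vert E \vert \leq \vert \mathcal{P} \vert \vert \mathcal{P}' \vert$ edges. Since every nonempty part of $\mathcal{P}$ meets at least one part of $\mathcal{P}'$ and vice versa, every vertex of $G$ has degree at least one, so I also record the lower bound $\vert E \vert \geq \max\{\vert \mathcal{P} \vert, \vert \mathcal{P}' \vert\}$, which is needed later to control the logarithmic factor. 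The Gomory--Hu tree, computed by either algorithm~\cite{Gomory1961a,Gusfield90a}, requires $\vert W \vert - 1 = \bigO(\vert \mathcal{P} \vert + \vert \mathcal{P}' \vert)$ minimum $Q$-$R$ cut computations, so both claimed bounds follow by multiplying the number of cuts by the per-cut cost of the respective flow routine and folding in the $\bigO(\vert V \vert + \vert \mathcal{P} \vert \vert \mathcal{P}' \vert)$ construction cost.

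For the first bound I would plug in the algorithm of~\cite{Orlin2013a}, whose per-cut cost is $\bigO(\vert W \vert \vert E \vert)$. Multiplying by $\vert W \vert - 1$ gives $\bigO((\vert \mathcal{P} \vert + \vert \mathcal{P}' \vert)^2 \vert \mathcal{P} \vert \vert \mathcal{P}' \vert)$, and expanding the square produces the three monomials $\vert \mathcal{P} \vert^3 \vert \mathcal{P}' \vert$, $\vert \mathcal{P} \vert^2 \vert \mathcal{P}' \vert^2$ and $\vert \mathcal{P} \vert \vert \mathcal{P}' \vert^3$. The routine step here is to absorb the cross term: by the arithmetic--geometric mean inequality $2 \vert \mathcal{P} \vert^2 \vert \mathcal{P}' \vert^2 \leq \vert \mathcal{P} \vert^3 \vert \mathcal{P}' \vert + \vert \mathcal{P} \vert \vert \mathcal{P}' \vert^3$, so the middle monomial is dominated, leaving exactly $\bigO(\vert V \vert + \vert \mathcal{P} \vert^3 \vert \mathcal{P}' \vert + \vert \mathcal{P} \vert \vert \mathcal{P}' \vert^3)$.

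For the second bound I would instead use the bipartite routine of~\cite{Ahuja1994a}, whose per-cut cost is $\bigO(\mu \vert E \vert \log(2 + \mu^2 / \vert E \vert))$ with $\mu = \min\{\vert \mathcal{P} \vert, \vert \mathcal{P}' \vert\}$. The step I expect to be the crux is that $\vert E \vert$ enters in two places with opposing monotonicity, so it must be estimated differently in each: the leading factor $\mu \vert E \vert$ is increasing in $\vert E \vert$ and is bounded using $\vert E \vert \leq \vert \mathcal{P} \vert \vert \mathcal{P}' \vert$, whereas the logarithm is increasing in $\mu^2 / \vert E \vert$ and is bounded using the lower bound $\vert E \vert \geq \max\{\vert \mathcal{P} \vert, \vert \mathcal{P}' \vert\}$ to obtain $\mu^2 / \vert E \vert \leq (\min\{\vert \mathcal{P} \vert, \vert \mathcal{P}' \vert\})^2 / \max\{\vert \mathcal{P} \vert, \vert \mathcal{P}' \vert\}$. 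Combining these two one-sided estimates and multiplying by $\vert W \vert - 1 = \bigO(\max\{\vert \mathcal{P} \vert, \vert \mathcal{P}' \vert\})$ collapses the product $\bigO(\max) \cdot \mu \cdot \vert \mathcal{P} \vert \vert \mathcal{P}' \vert$ into $\bigO(\vert \mathcal{P} \vert^2 \vert \mathcal{P}' \vert^2)$, leaving the logarithm with the advertised argument. The only point requiring care is that decoupling the two roles of $\vert E \vert$ is legitimate precisely because each substitution bounds a separately monotone factor from the correct side; this is the slightly loose but clean way to state a running time that would otherwise carry an unwieldy $\vert E \vert$-dependent logarithm.
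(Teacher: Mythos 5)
Your proposal is correct and takes essentially the same route as the paper's proof: both fold the $\bigO(\vert V \vert + \vert \mathcal{P} \vert \vert \mathcal{P}' \vert)$ construction of $G$ into $\vert W \vert - 1$ minimum $Q$-$R$ cut computations, using the $\bigO(\vert W \vert \vert E \vert)$ algorithm of~\cite{Orlin2013a} for the first bound (with the cross term $\vert \mathcal{P} \vert^2 \vert \mathcal{P}' \vert^2$ absorbed into the outer monomials) and the bipartite $\bigO(\mu \vert E \vert \log(2 + \mu^2/\vert E \vert))$ algorithm of~\cite{Ahuja1994a} together with the bounds $\max\{\vert \mathcal{P} \vert, \vert \mathcal{P}' \vert\} \leq \vert E \vert \leq \vert \mathcal{P} \vert \vert \mathcal{P}' \vert$ for the second. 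The only difference is one of detail: the paper declares the final estimates ``straightforward,'' while you make explicit the AM--GM absorption and the legitimate one-sided substitution of $\vert E \vert$ in each monotone factor.
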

\begin{proof}
Recall that generating $G$ takes time $\bigO(\vert V \vert + \vert
\mathcal{P} \vert \vert \mathcal{P}' \vert)$. Total time is the sum of
the latter and time for $\vert W \vert - 1$ calculations of minimum
$Q$-$R$ cuts.

Using the algorithm in~\cite{Orlin2013a}, $\vert W \vert - 1$
calculations of minimum $Q$-$R$ cuts take time $\bigO(\vert W \vert^2
\vert E \vert) = \vert \mathcal{P} \vert^3 \vert \mathcal{P}' \vert +
\vert \mathcal{P} \vert^2 \vert \mathcal{P}' \vert^2 + \vert
\mathcal{P} \vert \vert \mathcal{P}' \vert^3 = \vert \mathcal{P}
\vert^3 \vert \mathcal{P}' \vert + \vert \mathcal{P} \vert \vert
\mathcal{P}' \vert^3$. This yields the first asymptotic time.

To see that the second asymptotic time is valid, note that (i) $\vert
E \vert \leq \vert \mathcal{P} \vert \vert \mathcal{P}' \vert$ and
(ii) $\vert E \vert \geq \max\{\vert \mathcal{P} \vert, \vert
\mathcal{P}' \vert\}$. The remainder of the proof is
straightforward.
\end{proof}

\subsection{Proofs.}

\subsubsection{Proof of Proposition~\ref{prop:eq:firstMin1}.}
\label{proof:eq:firstMin1}
\begin{proof}
\noindent Starting with Eq.~(\ref{eq:firstMin1}), we get 

\begin{align}
\phi(\mathcal{S}, \mathcal{S}') &= \vert U_{\mathcal{S}} \setminus
U_{\mathcal{S}'} \vert + \vert U_{\mathcal{S}'} \setminus
U_{\mathcal{S}} \vert\nonumber\\ &= \vert U_{\mathcal{S}} \cap (V
\setminus U_{\mathcal{S}'})\vert + \sum_{P' \in \mathcal{S}'} \vert P'
\setminus U_{\mathcal{S}}\vert\nonumber\\ &= \sum_{P' \notin
  \mathcal{S}'} \vert U_{\mathcal{S}} \cap P'\vert + \sum_{P' \in
  \mathcal{S}'}(\vert P' \vert - \vert U_{\mathcal{S}} \cap P'
\vert)\label{eq:secondMin}.
\end{align}

\noindent By letting $\mathcal{S}'$ be an optimal partner of
$\mathcal{S}$, \eg by calculating $\mathcal{S}'$ using
Eq.~(\ref{eq:partner}), we minimize the contribution (damage) of
each $P' \in \mathcal{P}'$ to the right hand side of
Eq.~(\ref{eq:secondMin}), and thus minimize $\phi(\mathcal{S},
\cdot)$. Insertion of $\mathcal{S}'$ from Eq.~(\ref{eq:partner})
then yields

\begin{small}
\begin{eqnarray}
\min_{\mathcal{S}' \subseteq \mathcal{P}'}
\phi(\mathcal{S},\mathcal{S}') =& \sum_{P' \in \mathcal{P}'}
\min\{\vert U_{\mathcal{S}} \cap P'\vert, \vert P' \vert - \vert
U_{\mathcal{S}} \cap P' \vert\}\nonumber\\ =& \sum_{P' \in
  \mathcal{P}'} \vert P' \vert \min\{\frac{\vert U_{\mathcal{S}} \cap
  P'\vert}{\vert P' \vert}, 1 - \frac{\vert U_{\mathcal{S}} \cap P'
  \vert}{\vert P' \vert}\}\nonumber\\ =& \sum_{P' \in \mathcal{P}'}
\vert P' \vert \peak(\frac{\vert U_{\mathcal{S}} \cap P' \vert}{\vert
  P' \vert}).
\label{eq:towardPeak}
\end{eqnarray}
\end{small}
\end{proof}

\subsubsection{Proof of Proposition~\ref{prop:submodular}.}
\label{proof:prop:submodular}
\begin{proof}
The symmetry of $\phi_{\mathcal{P}'}(\cdot)$ follows from
that of $\peak(\cdot)$. Sums and multiples of submodular functions are
submodular~\cite{Schrijver2003a}. Thus, to show that
$\phi_{\mathcal{P}'}(\cdot)$ is submodular, it suffices to
show that $\phi_i(\mathcal{S}) := \peak(\frac{\vert U_{\mathcal{S}}
  \cap P'_i \vert}{\vert P'_i \vert})$ in Eq.~(\ref{eq:Frac}) is
submodular for all $i$.

Indeed, the $\phi_i(\cdot)$ are of the form $c(m(\cdot))$, where
$c(\cdot)$ is concave and $m(\cdot)$ is non-negative modular. Any
function of this form is
submodular~\cite{Stobbe2010b,Bilmes2012a}.
\end{proof}

\subsubsection{Proof of Proposition~\ref{prop:running-nontrivial:P}.}
\label{proof:prop:run-time-ctwo}
\begin{proof}
By means of the bipartite graph $G$ in
  Section~\ref{subsec:bipartite}, finding a mi\-ni\-mum $P_s$-$P_t$
  cut $(\mathcal{S}, \mathcal{P} \setminus \mathcal{S})$ of
  $\mathcal{P}$ can be achieved through (i) finding a minimum
  $P_s$-$P_t$ cut $(\mathcal{S} \sqcup \mathcal{S}', (\mathcal{P}
  \setminus \mathcal{S}) \sqcup (\mathcal{P}' \setminus
  \mathcal{S}'))$ of $G$ and (ii) extracting $(\mathcal{S},
  \mathcal{P} \setminus \mathcal{S})$. Step (i) is analogous to the
  proof of Proposition~\ref{prop:running-nontrivial:G}.
\end{proof}

\subsubsection{Proof of Proposition~\ref{prop:complexity}.}
\label{proof:prop:complexity}
\begin{proof}
Below, we refer to Algorithm OPTIMAL-SET
from~\cite{Queyranne98a}, where a symmetric submodular function
$f(\cdot)$ is minimized by building the set minimizing $f(\cdot)$ from
scratch. OPTIMAL-SET consists of $\bigO(\vert \mathcal{P} \vert^3)$
evaluations of $\phi_{\mathcal{P}'}(\cdot)$~\cite[Theorem
  3]{Queyranne98a}.

Due to (i) the fact that all distributions can be computed in
$\bigO(\vert V \vert + \vert \mathcal{P} \vert \vert \mathcal{P}'
\vert)$, (ii) Proposition~\ref{prop:submodular} of this paper, (iii)
Theorem~3 in~\cite{Queyranne98a} (which uses OPTIMAL-SET) and (iv)
Proposition~\ref{prop:evaluateFrac} of this paper, the asymptotic
running time for mi\-ni\-mi\-zing Eq.~(\ref{eq:Frac}) under the
constraint $\mathcal{S} \notin \{\emptyset, \mathcal{P}\}$ amounts to
$\bigO(\vert V \vert + \vert \mathcal{P} \vert^3 \vert \mathcal{P}
\vert \vert \mathcal{P}'\vert) = \bigO(\vert V \vert + \vert
\mathcal{P} \vert^4 \vert \mathcal{P}' \vert)$.
\end{proof}

\subsubsection{Proof of Proposition~\ref{prop:bound}.}
\label{proof:bound}
\begin{proof}
\begin{align}
\phi_{\mathcal{P}'}(\mathcal{S}) &=\sum_{P' \in
  \mathcal{P}'} \vert P' \vert \peak(\frac{\vert U_{\mathcal{S}} \cap
  P' \vert}{\vert P' \vert})\nonumber\\ &= \sum_{P' \in \mathcal{P}'}
\vert P' \vert \min\{\frac{\vert U_{\mathcal{S}} \cap P' \vert}{\vert
  P' \vert}, \frac{\vert U_{\mathcal{P} \setminus \mathcal{S}} \cap P'
  \vert}{\vert P' \vert}\}\nonumber\\ &= \sum_{P' \in \mathcal{P}'}
\min\{\vert U_{\mathcal{S}} \cap P' \vert, \vert U_{\mathcal{P}
  \setminus \mathcal{S}} \cap P' \vert\}\\ &\geq b(\mathcal{S}_s,
\mathcal{S}_t).
\end{align}
\end{proof}

\subsection{\cthree-correspondences and \cfour-correspondences.}
\label{sec:app:C3C4}
We first show that finding an optimal \cthree-correspondence between
$\mathcal{P}$ and $\mathcal{P}'$ amounts to finding a nontrivial
minimum $\mathcal{S}$ of a symmetric and non-submodular function
$\phistar: 2^{\mathcal{P}} \mapsto \RR_{\geq 0}$. Alternatively,
$\mathcal{S}$ can be found through minimizing $\vert \mathcal{P}
\vert^2$ non-symmetric submodular functions. As for \ctwo, a good
$\cthree$-correspondence is essentially a small cut of
$\mathcal{P}$. Since $\phistar$ is symmetric, there exists a minimum
cut basis containing $\vert \mathcal{P} \vert - 1$ minimum $P_s$-$P_t$
cuts of $\mathcal{P}$. These cuts give rise to a natural collection of
$\vert \mathcal{P} \vert - 1$ best \cthree-correspondences.

The rest of the section is on \cfour-correspondences. We derive a
property of \cfour-correspondences which suggests that finding good
\cfour-correspondences is more difficult than submodular
minimization. Nevertheless, the techniques that we developed for
finding good correspondences under the constraints $\ctwo$ and
$\cthree$ may be useful for finding at least a subset of good
\cfour-correspondence in a real-world application.

\paragraph*{\cthree-correspondences.}
The constraint \cthree: $\mathcal{S} \notin \{\emptyset, \mathcal{P}\}
\wedge \mathcal{S}' \notin \{\emptyset, \mathcal{P}'\}$ ensures that
$(U_{\mathcal{S}}, U_{\mathcal{P} \setminus \mathcal{S}})$ and
$(U_{\mathcal{S}'}, U_{\mathcal{P}' \setminus \mathcal{S}'})$ are cuts
of $V$. This is a prerequisite for finding a consensus partition via
good correspondences, see Section~\ref{subsec:constraints}.
Finding an optimal \cthree-correspondence $(\mathcal{S},
\mathcal{S}')$ amounts to finding $\emptyset \neq \mathcal{S}
\subsetneq \mathcal{P}$ with a minimum value of
$\phistar~:~2^{\mathcal{P}} \mapsto
\RR_{\geq 0}$ defined as
\begin{equation*}
\label{eq:phistar}
\phistar(\mathcal{S}) := \left\{
\begin{array}{ll}
0 & \mbox{if } \mathcal{S} \in \{\emptyset, \mathcal{P}\},\\
\min_{\emptyset \neq \mathcal{S}' \subsetneq \mathcal{P}'} \vert
U_{\mathcal{S}} \triangle U_{\mathcal{S}'}\vert & \mbox{otherwise.}
\end{array}
\right.
\end{equation*}

\begin{prop} $\phistar(\cdot)$
is symmetric and not submodular.
\label{prop:submodular-nondeg}
\end{prop}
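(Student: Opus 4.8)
The plan is to prove the two assertions independently: symmetry by a complementation argument, and the failure of submodularity by exhibiting one small explicit instance. For symmetry, I would first dispose of the boundary case: since $\mathcal{P} \setminus \emptyset = \mathcal{P}$ and $\mathcal{P} \setminus \mathcal{P} = \emptyset$, the sets $\mathcal{S}$ and $\mathcal{P} \setminus \mathcal{S}$ lie in $\{\emptyset,\mathcal{P}\}$ simultaneously, where $\phistar \equiv 0$. For $\mathcal{S} \notin \{\emptyset,\mathcal{P}\}$ the key observation is that complementation inside $V$ preserves symmetric differences, $(V \setminus A) \triangle (V \setminus B) = A \triangle B$. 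Because $\mathcal{S}$ and $\mathcal{P}\setminus\mathcal{S}$ (resp.\ $\mathcal{S}'$ and $\mathcal{P}'\setminus\mathcal{S}'$) have unions partitioning $V$, we have $U_{\mathcal{P}\setminus\mathcal{S}} = V \setminus U_{\mathcal{S}}$ and $U_{\mathcal{P}'\setminus\mathcal{S}'} = V \setminus U_{\mathcal{S}'}$, so $\vert U_{\mathcal{S}} \triangle U_{\mathcal{S}'}\vert = \vert U_{\mathcal{P}\setminus\mathcal{S}} \triangle U_{\mathcal{P}'\setminus\mathcal{S}'}\vert$. Since $\mathcal{S}' \mapsto \mathcal{P}' \setminus \mathcal{S}'$ is a bijection of the nonempty proper subsets of $\mathcal{P}'$ onto themselves, the minimizations defining $\phistar(\mathcal{S})$ and $\phistar(\mathcal{P}\setminus\mathcal{S})$ range over the same value set, which gives $\phistar(\mathcal{S}) = \phistar(\mathcal{P}\setminus\mathcal{S})$.

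For the failure of submodularity I would first make the structure transparent. Using the rewriting in Eq.~(\ref{eq:secondMin}), for any nontrivial $\mathcal{S}$ and any $\mathcal{S}'$ one has $\phi(\mathcal{S},\mathcal{S}') = \vert U_{\mathcal{S}}\vert + \sum_{P' \in \mathcal{S}'} g_{P'}(\mathcal{S})$, where $g_{P'}(\mathcal{S}) := \vert P'\vert - 2\,\vert U_{\mathcal{S}} \cap P'\vert$ is \emph{modular} (affine) in $\mathcal{S}$. Hence on any family of sets where every $g_{P'} \geq 0$ (i.e.\ $U_{\mathcal{S}}$ is a minority in each part of $\mathcal{P}'$), the cheapest nonempty proper partner is the singleton with smallest $g$, and $\phistar(\mathcal{S}) = \vert U_{\mathcal{S}}\vert + \min_{P' \in \mathcal{P}'} g_{P'}(\mathcal{S})$ — a modular term plus a pointwise minimum of modular functions. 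A minimum of several modular functions is not submodular in general, and this is exactly the mechanism I would exploit.

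Concretely, I would take $V = \{1,\dots,12\}$, $\mathcal{P}' = \{\{1,2,3,4\},\{5,6,7,8\},\{9,10,11,12\}\}$, and $\mathcal{P} = \{P_0, P_a, P_b, P_c\}$ with $P_0 = \{1,5,9\}$, $P_a = \{2\}$, $P_b = \{6\}$, $P_c = \{3,4,7,8,10,11,12\}$, and set $\mathcal{S}_1 = \{P_0,P_a\}$, $\mathcal{S}_2 = \{P_0,P_b\}$. Then $\mathcal{S}_1\cap\mathcal{S}_2 = \{P_0\}$ and $\mathcal{S}_1\cup\mathcal{S}_2 = \{P_0,P_a,P_b\}$ are both nontrivial, and all four sets lie in the minority region, so the formula above applies. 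A direct evaluation gives $\phistar(\mathcal{S}_1) = \phistar(\mathcal{S}_2) = 4$, while $\phistar(\mathcal{S}_1\cap\mathcal{S}_2) = \phistar(\mathcal{S}_1\cup\mathcal{S}_2) = 5$. Since $5 + 5 = 10 > 8 = 4 + 4$, the submodular inequality $\phistar(\mathcal{S}_1\cup\mathcal{S}_2) + \phistar(\mathcal{S}_1\cap\mathcal{S}_2) \leq \phistar(\mathcal{S}_1) + \phistar(\mathcal{S}_2)$ fails.

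The main obstacle is choosing the instance so that the counterexample is genuine and not an artifact. Two pitfalls must be avoided. First, the override $\phistar \equiv 0$ on $\{\emptyset,\mathcal{P}\}$: all four test sets must be nontrivial, which forces a base part $P_0$ into the intersection and a spare part $P_c$ outside the union. Second, the degeneracy at $\vert\mathcal{P}'\vert = 2$, where $U_{\mathcal{P}'\setminus\mathcal{S}'} = V \setminus U_{\mathcal{S}'}$ makes $\phistar$ a concave function of a single modular function, hence \emph{submodular}; this is why the construction needs $\vert\mathcal{P}'\vert = 3$. The design then arranges $P_a$ and $P_b$ to pull the penalty's minimizer toward two \emph{different} parts of $\mathcal{P}'$: each of $\mathcal{S}_1,\mathcal{S}_2$ collects the discount on its own part, but their union cannot collect both discounts simultaneously, which is what yields the strict violation.
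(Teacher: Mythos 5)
Your proof is correct, and its two halves relate to the paper's proof in different ways. The symmetry argument is essentially the paper's own: dispose of the trivial sets via the override, then combine $U_{\mathcal{P}\setminus\mathcal{S}} = V\setminus U_{\mathcal{S}}$ with the bijection $\mathcal{S}'\mapsto \mathcal{P}'\setminus\mathcal{S}'$ on the nonempty proper subsets of $\mathcal{P}'$. For non-submodularity you take a genuinely different route. The paper's counterexample (Figure~\ref{fig:exampleG}) uses $\mathcal{S}=\{P_1,P_3\}$, $\mathcal{T}=\{P_2,P_3\}$ whose union is all of $\mathcal{P}$, so $\phistar(\mathcal{S}\cup\mathcal{T})=0$ comes from the union covering everything; the violation $0+5>1+1$ lives on the boundary of $2^{\mathcal{P}}$. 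Your counterexample keeps all four sets $\mathcal{S}_1$, $\mathcal{S}_2$, $\mathcal{S}_1\cap\mathcal{S}_2$, $\mathcal{S}_1\cup\mathcal{S}_2$ nontrivial; I verified the values ($\phistar(\mathcal{S}_1)=\phistar(\mathcal{S}_2)=4$, $\phistar(\mathcal{S}_1\cap\mathcal{S}_2)=\phistar(\mathcal{S}_1\cup\mathcal{S}_2)=5$), and your reduction $\phistar(\mathcal{S}) = \vert U_{\mathcal{S}}\vert + \min_{P'\in\mathcal{P}'} g_{P'}(\mathcal{S})$ in the all-minority regime is sound. This buys something the paper's example does not: it shows the failure of submodularity is intrinsic to the minimization over partners, not an artifact of the convention $\phistar\equiv 0$ on $\{\emptyset,\mathcal{P}\}$.

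One side remark in your final paragraph is false, though it does not affect the proof. You claim that for $\vert\mathcal{P}'\vert = 2$ the function $\phistar$ is ``a concave function of a single modular function, hence submodular,'' and that this forces $\vert\mathcal{P}'\vert = 3$. The concave-of-modular theorem (the one invoked in the paper's proof of Proposition~\ref{prop:submodular}) requires the inner modular function to have \emph{nonnegative} weights; here the inner function is $\mathcal{S}\mapsto \vert U_{\mathcal{S}}\triangle P_1'\vert$, whose per-part weights $\vert P\vert - 2\vert P\cap P_1'\vert$ can be negative, and the conclusion indeed fails. Concretely, take $V=\{1,\dots,5\}$, $\mathcal{P}'=\{\{1,2\},\{3,4,5\}\}$, $\mathcal{P}=\{\{1,3\},\{2\},\{4\},\{5\}\}$, $\mathcal{S}_1=\{\{1,3\},\{2\}\}$, $\mathcal{S}_2=\{\{1,3\},\{4\}\}$: all four relevant sets are nontrivial, $\phistar(\mathcal{S}_1)=1$, $\phistar(\mathcal{S}_2)=2$, $\phistar(\mathcal{S}_1\cap\mathcal{S}_2)=\phistar(\mathcal{S}_1\cup\mathcal{S}_2)=2$, and $2+2>1+2$. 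So two parts in $\mathcal{P}'$ already suffice; your construction is fine, but the stated justification for its design is wrong.
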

\begin{proof}
If $\mathcal{S} \in \{\emptyset, \mathcal{P}\}$, then
$\phistar(\mathcal{P} \setminus \mathcal{S}) = \phistar(\mathcal{S}) =
0$. Otherwise,
\begin{align*}
\phistar(\mathcal{P} \setminus \mathcal{S})
&= \min_{\emptyset \neq \mathcal{S}' \subsetneq \mathcal{P}'} \vert
U_{\mathcal{P} \setminus \mathcal{S}} \triangle
U_{\mathcal{S}'}\vert\\ &= \min_{\emptyset \neq \mathcal{S}'
  \subsetneq \mathcal{P}'} \vert U_{\mathcal{P} \setminus \mathcal{S}}
\triangle U_{\mathcal{P}' \setminus \mathcal{S}'}\vert\\ &=
\min_{\emptyset \neq \mathcal{S}' \subsetneq \mathcal{P}'} \vert
U_{\mathcal{S}} \triangle U_{\mathcal{S}'}\vert =
\phistar(\mathcal{S}).
\end{align*}

\noindent A counterexample to submodularity of
$\phistar(\cdot)$ is provided in
Figure~\ref{fig:exampleG}.
\end{proof}

\begin{figure}
\begin{center}
\includegraphics[width=0.45\columnwidth]{./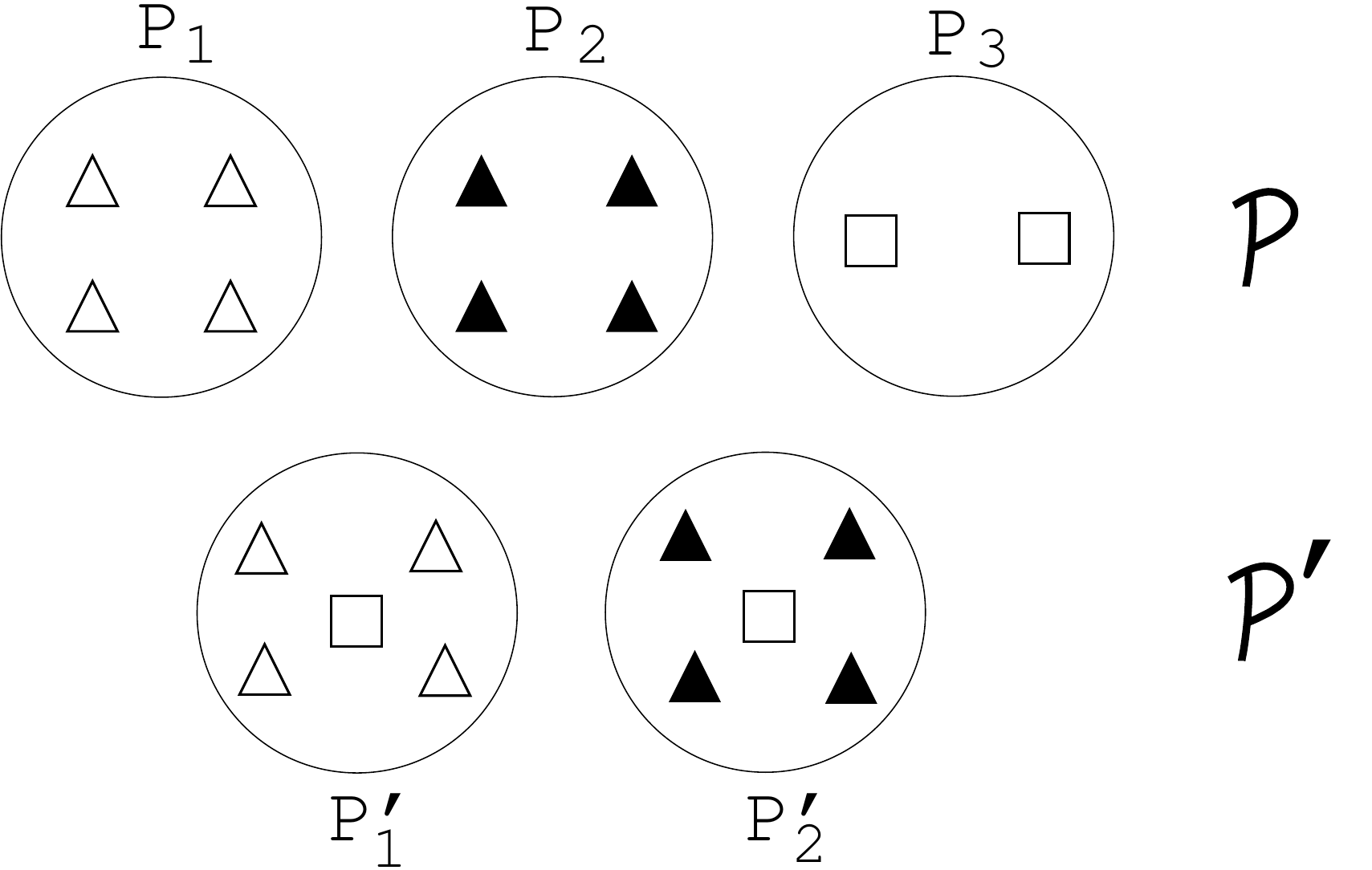}
\caption{\textsc{Counterexample to submodularity of}
  $\phistar(\cdot)$: set $\mathcal{S} :=
  \{P_1, P_3\}$ and $\mathcal{T} := \{P_2, P_3\}$. Then, $(\mathcal{S}
  \cup \mathcal{T})' = \mathcal{P}'$, $\mathcal{S}' = \{P_1'\}$,
  $\mathcal{T}' = \{P_2'\}$ and $(\mathcal{S} \cap \mathcal{T})' =
  \{P_1'\}$ are optimal nontrivial partners of $\mathcal{S} \cup
  \mathcal{T}$, $\mathcal{S}$, $\mathcal{T}$ and $\mathcal{S} \cap
  \mathcal{T}$, respectively. Thus,
  $\phistar(\mathcal{S} \cup \mathcal{T}) =
  0 > \phistar(\mathcal{S}) +
  \phistar(\mathcal{T}) -
  \phistar(\mathcal{S} \cap \mathcal{T}) =
  1 + 1 - 5$.\vspace{0.2cm}\protect\linebreak \textsc{Counterexample
    to symmetry of} $\phistar_{i',j'}(\cdot)$: for $\mathcal{S} =
  \{P_1\}$ we have $\phistar_{1,2}(\mathcal{S}) = \vert P_1 \cap P_2'
  \vert + \vert P_1' \vert - \vert P_1 \cap P_1' \vert = 0 + 5 - 4
  \neq 4 + 5 - 0 = \vert P_1 \cap P_1' \vert + \vert P_2' \vert -
  \vert P_1 \cap P_2' \vert =
  \phistar_{2,1}(\mathcal{S})$.\vspace{0.2cm}\protect\linebreak
  \textsc{Counterexample to mutual correspondences inducing a lattice
    family on $\mathcal{P}$}: The correspondences $(\{P_1\},
  \{P_1'\})$ and $(\{P_2\}, \{P_2'\})$ are mutual, but there is no
  mutual cor\-res\-pon\-dence $(\{P_1, P_2\}, X')$ with $X' \subseteq
  \mathcal{P}'$.}
\label{fig:exampleG}
\end{center}
\end{figure}

Analogous to \ctwo-correspondences, we reformulate the problem of
finding an optimal \cthree-correspondence under the constraint $P_i'
\in \mathcal{S}'$ and $P_j' \in \mathcal{P}' \setminus
\mathcal{S}'$. Let $\mathcal{S} \subseteq \mathcal{P}$ and
$\mathcal{S}' \subseteq \mathcal{P}'$. Then,
Eq.~(\ref{eq:secondMin}) and the constraint imply
\begin{align}
\vert U_{\mathcal{S}} \triangle U_{\mathcal{S}'} \vert = \vert
U_{\mathcal{S}} \cap P_j' \vert + \vert P_i' \vert - \vert
U_{\mathcal{S}} \cap P_i' \vert + \nonumber\\
\sum_{\substack{P'
    \notin \mathcal{S}'\\P' \notin \{P_i', P_j'\}}} \vert
U_{\mathcal{S}} \cap P'\vert + \sum_{\substack{P' \in \mathcal{S}'\\P'
    \notin \{P_i', P_j'\}}}(\vert P' \vert - \vert U_{\mathcal{S}}
\cap P' \vert)
\label{eq:thirdMin}
\end{align}

Analogous to Eq.~(\ref{eq:towardPeak}) we set
\begin{equation}
\label{eq:partner1Copy}
\mathcal{S}' := \{P_i'\} \cup \{P' \in \mathcal{P}' \setminus
\{P_i', P_j'\} : \vert U_{\mathcal{S}} \cap P' \vert > \frac{\vert
  P'\vert}{2}\},
\end{equation}
and thus minimize the contribution (damage) of each $P' \in
\mathcal{P}'$ in the sums of Eq.~(\ref{eq:thirdMin}). In
particular, the following holds for any $\emptyset \subseteq
\mathcal{S} \subseteq \mathcal{P}$:
\begin{align*}
\min_{\mathcal{S}' \subseteq \mathcal{P}'} \vert U_{\mathcal{S}}
\triangle U_{\mathcal{S}'} \vert = 
\vert U_{\mathcal{S}} \cap P_j' \vert + \vert P_i' \vert - \vert U_{\mathcal{S}} \cap P_i' \vert \\ 
+ \nonumber \sum_{\substack{P' \in \mathcal{P}'\\P' \notin \{P_i',
    P_j'\}}} \min\{\vert U_{\mathcal{S}} \cap P'\vert, \vert P' \vert
- \vert U_{\mathcal{S}} \cap P' \vert\} \nonumber\\ 
= \vert U_{\mathcal{S}} \cap P_j' \vert + \vert P_i' \vert - \vert
U_{\mathcal{S}} \cap P_i' \vert + \nonumber\\
\sum_{\substack{P' \in \mathcal{P}'\\P' \notin \{P_i', P_j'\}}} \vert P' \vert
\min\{\frac{\vert U_{\mathcal{S}} \cap P'\vert}{\vert P' \vert}, 1 -
\frac{\vert U_{\mathcal{S}} \cap P' \vert}{\vert P' \vert}\}
\nonumber\\ 
= \vert U_{\mathcal{S}} \cap P_j' \vert + \vert P_i'
\vert - \vert U_{\mathcal{S}} \cap P_i' \vert +
\nonumber\\
\sum_{\substack{P' \in \mathcal{P}'\\P' \notin \{P_i',
    P_j'\}}} \vert P' \vert \peak(\frac{\vert U_{\mathcal{S}} \cap P'
  \vert}{\vert P' \vert}).
\end{align*}
\noindent Proposition~\ref{prop:constrained} below summarizes our
findings.

\begin{prop}
An optimal \cthree-correspondence
  $(\mathcal{S}, \mathcal{S}')$ under the constraint $P_i' \in
  \mathcal{S}'$ and $P_j' \in \mathcal{P}' \setminus \mathcal{S}'$ can
  be computed by first finding $\emptyset \neq \mathcal{S} \subsetneq
  \mathcal{S}$ that minimizes the term
\begin{align*}
\phistar_{i',j'}(\mathcal{S}) := \vert U_{\mathcal{S}} \cap P_j' \vert +
\vert P_i' \vert - \vert U_{\mathcal{S}} \cap P_i' \vert + \\ 
\sum_{\substack{P' \in \mathcal{P}'\\P' \notin \{P_i', P_j'\}}} \vert
P' \vert \peak(\frac{\vert U_{\mathcal{S}} \cap P' \vert}{\vert P'
  \vert})
\end{align*}

\noindent and then setting $\mathcal{S}'$ as in
Eq.~(\ref{eq:partner1Copy}).
\label{prop:constrained}
\end{prop}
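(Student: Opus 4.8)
The plan is to establish the proposition by decoupling the joint minimization over the pair $(\mathcal{S}, \mathcal{S}')$ into an outer minimization over $\mathcal{S}$ and an inner minimization over $\mathcal{S}'$, and by solving the inner problem in closed form. Concretely, finding an optimal constrained \cthree-correspondence means minimizing $\vert U_{\mathcal{S}} \triangle U_{\mathcal{S}'}\vert$ over all admissible pairs; since $\min_{\mathcal{S}}\min_{\mathcal{S}'} = \min_{\mathcal{S},\mathcal{S}'}$, it suffices to show that, for each fixed $\mathcal{S}$, the inner minimum over $\mathcal{S}'$ satisfying $P_i' \in \mathcal{S}'$ and $P_j' \in \mathcal{P}' \setminus \mathcal{S}'$ equals $\phistar_{i',j'}(\mathcal{S})$ and is attained at the $\mathcal{S}'$ of Eq.~(\ref{eq:partner1Copy}).

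First I would fix $\mathcal{S}$ and exploit the separable form of the objective given in Eq.~(\ref{eq:secondMin}): because the parts of $\mathcal{P}'$ are pairwise disjoint, $\vert U_{\mathcal{S}} \triangle U_{\mathcal{S}'}\vert$ is a sum of independent per-part contributions, namely $\vert P'\vert - \vert U_{\mathcal{S}} \cap P'\vert$ when $P' \in \mathcal{S}'$ and $\vert U_{\mathcal{S}} \cap P'\vert$ when $P' \notin \mathcal{S}'$. The two constraints fix the contributions of $P_i'$ and $P_j'$: the part $P_i'$ is forced into $\mathcal{S}'$ and contributes $\vert P_i'\vert - \vert U_{\mathcal{S}} \cap P_i'\vert$, while $P_j'$ is forced out and contributes $\vert U_{\mathcal{S}} \cap P_j'\vert$. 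These forced choices also guarantee $\emptyset \neq \mathcal{S}' \subsetneq \mathcal{P}'$, so the \cthree-requirement on $\mathcal{S}'$ holds automatically and need not be imposed separately. For every remaining $P' \notin \{P_i', P_j'\}$ the membership decision is free and independent, so the minimizing choice is the majority rule of Eq.~(\ref{eq:partner1Copy}); its contribution is $\min\{\vert U_{\mathcal{S}} \cap P'\vert, \vert P'\vert - \vert U_{\mathcal{S}} \cap P'\vert\} = \vert P'\vert \peak(\vert U_{\mathcal{S}} \cap P'\vert / \vert P'\vert)$ by the definition of $\peak(\cdot)$ in Eq.~(\ref{eq:peak}). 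Summing the two fixed terms and these $\peak$ terms reproduces $\phistar_{i',j'}(\mathcal{S})$; this is precisely the chain of equalities displayed before the proposition, so that portion of the work is already carried out in Eq.~(\ref{eq:thirdMin}) and its successors.

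With the inner minimum identified, I would close the argument by the outer minimization: minimizing $\phistar_{i',j'}(\cdot)$ over $\emptyset \neq \mathcal{S} \subsetneq \mathcal{P}$ yields an optimal $\mathcal{S}$, after which Eq.~(\ref{eq:partner1Copy}) recovers a matching optimal $\mathcal{S}'$, producing a globally optimal constrained \cthree-correspondence. I expect the only delicate point to be the direction-sensitive bookkeeping of the forced terms — that $P_i' \in \mathcal{S}'$ contributes to $U_{\mathcal{S}'} \setminus U_{\mathcal{S}}$ whereas $P_j' \notin \mathcal{S}'$ contributes to $U_{\mathcal{S}} \setminus U_{\mathcal{S}'}$ — together with the verification that the per-part contributions are genuinely independent once $\mathcal{S}$ is fixed; beyond this, the argument is routine given the separable expansion of Eq.~(\ref{eq:secondMin}).
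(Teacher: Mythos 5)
Your proposal is correct and follows essentially the same route as the paper: fix $\mathcal{S}$, expand $\vert U_{\mathcal{S}} \triangle U_{\mathcal{S}'}\vert$ into per-part contributions over $\mathcal{P}'$ (the paper's Eq.~(\ref{eq:secondMin})/(\ref{eq:thirdMin})), observe that the constraint fixes the terms for $P_i'$ and $P_j'$, minimize the free terms by the majority rule of Eq.~(\ref{eq:partner1Copy}) to obtain the $\peak$-sum, and finish with the outer minimization over $\emptyset \neq \mathcal{S} \subsetneq \mathcal{P}$. The paper presents exactly this chain of equalities immediately before the proposition and then "summarizes" it, so your argument matches its proof, including the correct observation that the forced membership of $P_i'$ and exclusion of $P_j'$ automatically ensure $\emptyset \neq \mathcal{S}' \subsetneq \mathcal{P}'$.
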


\begin{prop}$\phistar_{i',j'}(\cdot)$ is submodular and not
    symmetric.
\label{prop:submodular_constrained}
\end{prop}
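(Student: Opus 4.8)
The plan is to establish the two assertions separately, in each case reusing the machinery already set up for Proposition~\ref{prop:submodular}.

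\textbf{Submodularity.} First I would split $\phistar_{i',j'}$ into its three leading terms $\vert U_{\mathcal{S}} \cap P_j' \vert + \vert P_i' \vert - \vert U_{\mathcal{S}} \cap P_i' \vert$ and the residual $\peak$-sum $\sum_{P' \notin \{P_i', P_j'\}} \vert P' \vert \peak(\vert U_{\mathcal{S}} \cap P' \vert / \vert P' \vert)$. Since $\vert U_{\mathcal{S}} \cap Q \vert = \sum_{P \in \mathcal{S}} \vert P \cap Q \vert$ is set-additive in $\mathcal{S}$, each of the first three terms is modular (the constant $\vert P_i' \vert$ trivially, and negating a modular function leaves it modular), hence submodular. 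The $\peak$-sum is submodular by the very argument of Proposition~\ref{prop:submodular}: every summand has the form $c(m(\cdot))$ with $c$ concave and $m$ non-negative modular. As sums of submodular functions are submodular~\cite{Schrijver2003a} and adding a modular function preserves submodularity, $\phistar_{i',j'}$ is submodular.

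\textbf{Non-symmetry.} Here the cleanest route is to first record the identity $\phistar_{i',j'}(\mathcal{P} \setminus \mathcal{S}) = \phistar_{j',i'}(\mathcal{S})$. It holds because $\mathcal{P}$ partitions $V$, so $U_{\mathcal{P} \setminus \mathcal{S}} \cap P' = P' \setminus U_{\mathcal{S}}$ has cardinality $\vert P' \vert - \vert U_{\mathcal{S}} \cap P' \vert$; substituting this into the three leading terms turns them into the leading terms of $\phistar_{j',i'}(\mathcal{S})$, while the $\peak$-sum is left invariant since $\peak(1-x) = \peak(x)$ and the excluded index set $\{P_i', P_j'\}$ is unchanged under the swap. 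Consequently, were $\phistar_{i',j'}$ symmetric in the sense of Definition~\ref{def:submodular}, we would obtain $\phistar_{i',j'}(\mathcal{S}) = \phistar_{j',i'}(\mathcal{S})$ for all $\mathcal{S}$. But Figure~\ref{fig:exampleG} exhibits $\mathcal{S} = \{P_1\}$ with $\phistar_{1,2}(\{P_1\}) = 0 + 5 - 4 = 1 \neq 9 = 4 + 5 - 0 = \phistar_{2,1}(\{P_1\})$, which refutes this, and hence $\phistar_{i',j'}$ is not symmetric.

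The submodular part is essentially bookkeeping layered on top of Proposition~\ref{prop:submodular}, so I expect no difficulty there. The subtle point, and the one I would treat carefully, is non-symmetry: the ready-made counterexample in Figure~\ref{fig:exampleG} compares $\phistar_{1,2}$ with $\phistar_{2,1}$ (an index swap), \emph{not} $\mathcal{S}$ with its complement. The real work therefore lies in establishing the bridging identity $\phistar_{i',j'}(\mathcal{P} \setminus \mathcal{S}) = \phistar_{j',i'}(\mathcal{S})$ and in confirming that the $\peak$-sum cancels exactly, so that the asymmetry of the three leading terms alone carries the inequality demanded by Definition~\ref{def:submodular}.
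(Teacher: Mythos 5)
Your proposal is correct and follows essentially the same route as the paper's proof: the same modular-plus-submodular decomposition (relying on Proposition~\ref{prop:submodular} for the $\peak$-terms), the same bridging identity $\phistar_{i',j'}(\mathcal{P} \setminus \mathcal{S}) = \phistar_{j',i'}(\mathcal{S})$, and the same counterexample from Figure~\ref{fig:exampleG}. The only difference is that you spell out the verification of the bridging identity (via $\vert U_{\mathcal{P} \setminus \mathcal{S}} \cap P' \vert = \vert P' \vert - \vert U_{\mathcal{S}} \cap P' \vert$ and $\peak(x) = \peak(1-x)$), which the paper asserts without detail.
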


\begin{proof}
Sums and positive multiples of submodular functions are
submodular~\cite{Schrijver2003a}. Thus, since $f(\mathcal{S}) := \vert
U_{\mathcal{S}} \cap P_j' \vert$ and $g(\mathcal{S}) := \vert P_i'
\vert - \vert U_{\mathcal{S}} \cap P_i' \vert$ are modular functions,
submodularity of $\phistar_{i',j'}(\cdot)$ follows from
$h(\mathcal{S}) := \peak(\frac{\vert U_{\mathcal{S}} \cap P'
  \vert}{\vert P' \vert})$ being submodular for all $P' \in
\mathcal{P}'$. The latter was shown in the proof of
Proposition~\ref{prop:submodular}. To see that
$\phistar_{i',j'}(\cdot)$ is not symmetric, first note that
$\phistar_{i',j'}(\mathcal{S}) = \phistar_{j',i'}(\mathcal{P \setminus
  \mathcal{S}})$. Thus, symmetry of $\phistar_{i',j'}(\cdot)$ would
imply $\phistar_{i',j'}(\mathcal{S}) = \phistar_{j',i'}(\mathcal{S})$
for all $\emptyset \neq \mathcal{S} \subsetneq \mathcal{P}$. For a
counterexample see Figure~\ref{fig:exampleG}.
\end{proof}

The function $\phistar(\cdot)$ is symmetric, see
Proposition~\ref{prop:submodular-nondeg}. Thus, we can compute a
minimum cut basis of $\mathcal{P}$ \wrt $\phistar(\cdot)$ by finding a
certain collection of $\vert \mathcal{P} \vert - 1$ minimum
$P_s$-$P_t$ cuts of $\mathcal{P}$~\cite{Cheng1992a}. Minimum
$P_s$-$P_t$ cuts are defined as in Definition~\ref{def:s-t-cuts} with
the exception that ``minimum'' now is \wrt $\phistar(\cdot)$. In
contrast to the $\vert \mathcal{P} \vert - 1$ cuts in the minimum
basis \wrt $\phi_{\mathcal{P}'}(\cdot)$, the cuts in the minimum basis
\wrt $\phistar(\cdot)$ are crossing cuts, in general.

\paragraph*{\cfour-correspondences.}
These correspondences raise two major difficulties. First, the sets
$\mathcal{S}$ in mutual correspondences $(\mathcal{S}, \mathcal{S}')$
do not form a lattice family~\cite{Goemans95a}. The latter is a family
$\mathcal{L}$ of subsets of a set $V$ such that $A, B \in \mathcal{L}$
implies $A \cap B, A \cup B \in \mathcal{L}$. For an example of mutual
correspondences causing a lattice family conflict see
Figure~\ref{fig:exampleG}. Second, it can occur that there are no
\cfour-correspondences at all, which raises a serious problem to any
B\&B algorithm for finding
\cfour-correspondences.


\subsection{Optimal \ctwo-correspondences and mutual correspondences.}
\label{subsec:app:mutual}
Propositions~\ref{prop:char_nontriv} and~\ref{prop:char_nondeg} below
tells us that an optimal \ctwo-correspondence or
\cthree-cor\-res\-pon\-dence is either mutual or simple. Here, ``simple''
means that Eq.~(\ref{eq:simple}) is fulfilled.

\begin{prop}
If an optimal \ctwo-correspondence $(\mathcal{S}, \mathcal{S}')$
is not mutual, then
\begin{equation}
\label{eq:simple}
\vert \mathcal{S} \vert \in \{1, \vert \mathcal{P} \vert - 1\} \vee \vert \mathcal{S}' \vert \in \{1, \vert \mathcal{P}' \vert - 1\}.
\end{equation}

\label{prop:char_nontriv}
\end{prop}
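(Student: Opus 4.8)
The plan is to first dispose of conditions~3 and~4 in Definition~\ref{def:mutual} essentially for free, and then reduce the remaining two conditions to a local exchange argument against the optimality of $\mathcal{S}$. The key observation is that, since $\mathcal{S}'$ is an optimal partner of $\mathcal{S}$, the decomposition in Eq.~(\ref{eq:secondMin}) forces each $P' \in \mathcal{P}'$ to contribute its minimum to $\phi(\mathcal{S}, \mathcal{S}')$; consequently $P' \in \mathcal{S}'$ implies $\vert U_{\mathcal{S}} \cap P' \vert \geq \vert P' \vert / 2$ and $P' \notin \mathcal{S}'$ implies $\vert U_{\mathcal{S}} \cap P' \vert \leq \vert P' \vert / 2$. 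These are exactly conditions~3 and~4 of mutuality. Hence, if $(\mathcal{S}, \mathcal{S}')$ is \emph{not} mutual, then condition~1 or condition~2 must fail.

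Next I would handle a failure of condition~1: there is $P \in \mathcal{S}$ with $\vert P \cap U_{\mathcal{S}'} \vert < \vert P \vert / 2$. Keeping $\mathcal{S}'$ fixed and comparing $\phi(\mathcal{S}, \mathcal{S}')$ with $\phi(\mathcal{S} \setminus \{P\}, \mathcal{S}')$ element by element over $P$ (all elements outside $P$ contribute equally, as $U_{\mathcal{S}}$ and $U_{\mathcal{S} \setminus \{P\}}$ agree there because $\mathcal{P}$ is a partition) gives
\begin{equation*}
\phi(\mathcal{S}, \mathcal{S}') - \phi(\mathcal{S} \setminus \{P\}, \mathcal{S}') = \vert P \vert - 2\,\vert P \cap U_{\mathcal{S}'} \vert > 0.
\end{equation*}
Thus $\phi_{\mathcal{P}'}(\mathcal{S} \setminus \{P\}) \leq \phi(\mathcal{S} \setminus \{P\}, \mathcal{S}') < \phi_{\mathcal{P}'}(\mathcal{S})$. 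If $\vert \mathcal{S} \vert \geq 2$, then $\mathcal{S} \setminus \{P\} \notin \{\emptyset, \mathcal{P}\}$ is a legal \ctwo-competitor (it is nonempty and, being a proper subset of $\mathcal{S} \subsetneq \mathcal{P}$, differs from $\mathcal{P}$), contradicting the optimality of $\mathcal{S}$. Hence $\vert \mathcal{S} \vert = 1$.

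Symmetrically, a failure of condition~2 yields some $P \in \mathcal{P} \setminus \mathcal{S}$ with $\vert P \cap U_{\mathcal{S}'} \vert > \vert P \vert / 2$, and the analogous bookkeeping gives $\phi(\mathcal{S}, \mathcal{S}') - \phi(\mathcal{S} \cup \{P\}, \mathcal{S}') = 2\,\vert P \cap U_{\mathcal{S}'} \vert - \vert P \vert > 0$. Exactly as before, $\mathcal{S} \cup \{P\}$ is a legal \ctwo-competitor unless it equals $\mathcal{P}$, i.e.\ unless $\vert \mathcal{S} \vert = \vert \mathcal{P} \vert - 1$; otherwise we contradict optimality. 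In either branch we obtain $\vert \mathcal{S} \vert \in \{1, \vert \mathcal{P} \vert - 1\}$, which is the first disjunct of Eq.~(\ref{eq:simple}), so the claim follows.

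The conceptual crux — and the step I expect to carry the whole argument — is the opening observation that conditions~3 and~4 hold automatically because $\mathcal{S}'$ is an optimal partner; once that is in place, the rest is the routine sign computation in the two symmetric-difference expressions. The one point to treat with care is the boundary case: the strict decrease only contradicts optimality when the modified set is a \emph{legal} competitor, and it is precisely the two degenerate escapes ($\mathcal{S}$ collapsing to a singleton, or expanding to $\mathcal{P} \setminus \{P\}$) that deliver the stated conclusion instead of a contradiction.
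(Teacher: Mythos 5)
Your proof is correct, and its engine is the same as the paper's: a local exchange argument showing that a violated mutuality condition lets you strictly decrease $\phi$, which contradicts optimality unless the modified set is an illegal competitor, and the two ``escapes'' are exactly $\vert \mathcal{S} \vert = 1$ and $\vert \mathcal{S} \vert = \vert \mathcal{P} \vert - 1$. Where you genuinely depart from the paper is in the treatment of conditions~3 and~4 of Definition~\ref{def:mutual}. The paper handles all four conditions symmetrically (``likewise''), reading off $\vert \mathcal{S}' \vert = 1$ and $\vert \mathcal{S}' \vert = \vert \mathcal{P}' \vert - 1$ from failures of conditions~3 and~4; you instead observe that under \ctwo the partner $\mathcal{S}'$ is unconstrained, so any violation on the $\mathcal{S}'$-side can be repaired by toggling the membership of the offending $P'$ --- a move that is always \ctwo-legal --- and hence conditions~3 and~4 can never fail at an optimal \ctwo-correspondence. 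This buys you a strictly stronger conclusion (non-mutuality forces the first disjunct $\vert \mathcal{S} \vert \in \{1, \vert \mathcal{P} \vert - 1\}$ on its own), and it is arguably the more faithful reading of the \ctwo case: the paper's parallel argument for items~3 and~4 really belongs to the \cthree setting of Proposition~\ref{prop:char_nondeg}, where shrinking or growing $\mathcal{S}'$ can render it trivial and thereby block the exchange, whereas under \ctwo that exchange is never blocked. The only nitpick is that you assert ``$\mathcal{S}'$ is an optimal partner'' as given; it follows in one line (otherwise replace $\mathcal{S}'$ by an optimal partner, which keeps \ctwo and strictly decreases $\phi$), and stating that line explicitly would make the opening step airtight.
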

\begin{proof}
Let $(\mathcal{S}, \mathcal{S}')$ be an optimal \ctwo-correspondence
that is not mutual. First assume that $(\mathcal{S}, \mathcal{S}')$
does not fulfill item $1.$ in Definition~\ref{def:mutual}. Then there
exists $P \in \mathcal{S}$ with $\vert U_{\mathcal{S} \setminus \{P\}}
\triangle U_{\mathcal{S}'} \vert < \vert U_{\mathcal{S}} \triangle
U_{\mathcal{S}'} \vert$. Since $(\mathcal{S}, \mathcal{S}')$ is an
optimal \ctwo-correspondence, the correspondence $(\mathcal{S}
\setminus \{P\}, \mathcal{S}')$ cannot fulfill \ctwo, \ie $\vert
\mathcal{S} \vert = 1$. Likewise, items $2.$, $3.$ and $4.$ imply
$\vert \mathcal{S} \vert = \vert \mathcal{P} \vert - 1$, $\vert
\mathcal{S}' \vert = 1$ and $\vert \mathcal{S}' \vert = \vert
\mathcal{P}' \vert - 1$, respectively.
\end{proof}

\noindent An analogous proof leads to an analogous characterization
of \cthree-cor\-res\-pond\-en\-ces.


\begin{prop}
If an optimal \cthree-cor\-res\-pon\-dence $(\mathcal{S}, \mathcal{S}')$ is
not mutual, then Eq.~(\ref{eq:simple}) holds.
\label{prop:char_nondeg}
\end{prop}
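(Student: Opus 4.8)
The plan is to run the proof of Proposition~\ref{prop:char_nontriv} essentially verbatim; the only genuine change is that the constraint \cthree now restricts $\mathcal{S}'$ as well as $\mathcal{S}$, and it is precisely this extra restriction that lets failures of items~3 and~4 of Definition~\ref{def:mutual} force $\vert\mathcal{S}'\vert\in\{1,\vert\mathcal{P}'\vert-1\}$. So I would start by letting $(\mathcal{S},\mathcal{S}')$ be an optimal \cthree-correspondence that is \emph{not} mutual, so that at least one of the four items of Definition~\ref{def:mutual} fails, and then treat each failing item by exhibiting a one-part move that strictly decreases $\phi(\mathcal{S},\mathcal{S}')=\vert U_{\mathcal{S}}\triangle U_{\mathcal{S}'}\vert$.

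First I would record the effect of a single move. Since the parts are disjoint, removing $P\in\mathcal{S}$ from $\mathcal{S}$ changes $\vert U_{\mathcal{S}}\triangle U_{\mathcal{S}'}\vert$ by $2\vert P\cap U_{\mathcal{S}'}\vert-\vert P\vert$, while adding $P\in\mathcal{P}\setminus\mathcal{S}$ to $\mathcal{S}$ changes it by $\vert P\vert-2\vert P\cap U_{\mathcal{S}'}\vert$; this follows at once by expanding $\phi$ as $\vert U_{\mathcal{S}}\setminus U_{\mathcal{S}'}\vert+\vert U_{\mathcal{S}'}\setminus U_{\mathcal{S}}\vert$ (cf.\ Eq.~(\ref{eq:secondMin})). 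Because $\triangle$ is symmetric, the analogous formulas with the roles of $\mathcal{S}$ and $\mathcal{S}'$ exchanged govern moves on $\mathcal{S}'$. Hence a violation of item~1 ($\vert P\cap U_{\mathcal{S}'}\vert<\vert P\vert/2$ for some $P\in\mathcal{S}$) makes $(\mathcal{S}\setminus\{P\},\mathcal{S}')$ strictly better, a violation of item~2 makes $(\mathcal{S}\cup\{P\},\mathcal{S}')$ strictly better, and symmetrically items~3 and~4 yield strictly better $(\mathcal{S},\mathcal{S}'\setminus\{P'\})$ and $(\mathcal{S},\mathcal{S}'\cup\{P'\})$, respectively.

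Next I would invoke optimality: each strictly improving candidate must be infeasible for \cthree. In the item~1 case the second coordinate is untouched, so $\mathcal{S}'\notin\{\emptyset,\mathcal{P}'\}$ still holds and the infeasibility must come from $\mathcal{S}\setminus\{P\}\in\{\emptyset,\mathcal{P}\}$; as $\mathcal{S}\setminus\{P\}\subsetneq\mathcal{S}\subsetneq\mathcal{P}$ this forces $\mathcal{S}\setminus\{P\}=\emptyset$, i.e.\ $\vert\mathcal{S}\vert=1$. The item~2 case forces $\mathcal{S}\cup\{P\}=\mathcal{P}$, i.e.\ $\vert\mathcal{S}\vert=\vert\mathcal{P}\vert-1$. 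The item~3 and item~4 cases are identical after swapping $(\mathcal{S},\mathcal{P})$ with $(\mathcal{S}',\mathcal{P}')$ and force $\vert\mathcal{S}'\vert=1$ and $\vert\mathcal{S}'\vert=\vert\mathcal{P}'\vert-1$, respectively. In every case Eq.~(\ref{eq:simple}) holds, which closes the argument.

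I expect no real obstacle here; the only place deserving care, and the sole departure from the \ctwo argument, is the bookkeeping that a move on the $\mathcal{S}'$-side can break feasibility under \cthree (because \cthree constrains $\mathcal{S}'$), whereas under \ctwo such a move never leaves the feasible region. The one computation, namely the sign of the change-in-$\phi$ formula, is immediate once $U_{\mathcal{S}}\triangle U_{\mathcal{S}'}$ is expanded as in Eq.~(\ref{eq:secondMin}).
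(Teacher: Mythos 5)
Your proof is correct and is precisely the ``analogous proof'' the paper invokes for Proposition~\ref{prop:char_nondeg}: a single-part move strictly improves $\phi$ whenever an item of Definition~\ref{def:mutual} fails, so optimality forces that move to leave the \cthree-feasible region, which yields Eq.~(\ref{eq:simple}) case by case. Your added observation -- that moves on the $\mathcal{S}'$-side can only break feasibility under \cthree (not under \ctwo), which is exactly what makes the $\vert\mathcal{S}'\vert\in\{1,\vert\mathcal{P}'\vert-1\}$ alternatives non-vacuous here -- is the right point of care and matches the intent of the paper's argument.
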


\subsection{Details on B\&B algorithm.}
\label{subsec:app:bb}
The following notation will make it easier to formulate our
B\&B algorithm presented as Algorithm~\ref{algo:BB}.
\begin{notation}
\label{notation:indices}
\wwlog the parts in $\mathcal{S}_s \cup \mathcal{S}_t \setminus
\{P_s, P_t\}$ are denoted by $\hat{P}_1, \dots, \hat{P}_{\vert
  \mathcal{S}_s \cup \mathcal{S}_t \vert - 2}$, and the indices
reflect the order in which the parts were added to $\mathcal{S}_s
\setminus \{P_s\}$ or $\mathcal{S}_t \setminus \{P_t\}$ (the larger
an index, the later the part was added).
\end{notation}

\begin{algorithm}[tbh]
  \caption{B\&B algorithm for finding a minimum
    $P_s$-$P_t$ cut.}
  \label{algo:BB}
  \begin{algorithmic}[1]
    \State $\mathcal{S}_s \gets \{P_s\}$, $\mathcal{S}_t \gets
    \{P_t\}$
    \State $bestSoFar \gets \infty$
    \Do
    \State $\greed(\mathcal{S}_s, \mathcal{S}_t, bestSoFar)$
    \If{$\mathcal{S}_s \cup \mathcal{S}_t = \mathcal{P}$}
    \Comment{\ie we have found a $P_s$-$P_t$ cut}
    \If{$b(\mathcal{S}_s, \mathcal{S}_t) < bestSoFar$}
    \Comment{$b(\mathcal{S}_s, \mathcal{S}_t) = \phi_{\mathcal{P}'}(\mathcal{S}_s) = \phi_{\mathcal{P}'} (\mathcal{S}_t)$}
    \State $\mathcal{S} \gets \mathcal{S}_s$
    \State $bestSoFar \gets b(\mathcal{S}_s, \mathcal{S}_t)$
    \EndIf
    \EndIf
    \State $i \gets 0$ \Comment{Beginning of undo}
    \Do
    \If{$\hat{P}_{\vert \mathcal{S}_s \cup \mathcal{S}_t \vert-2-i}
      \in \mathcal{S}_s$}
    \State $\mathcal{S}_s \gets \mathcal{S}_s \setminus \hat{P}_{\vert \mathcal{S}_s \cup \mathcal{S}_t \vert-2-i}$
    \Else \Comment{\ie $\hat{P}_{\vert \mathcal{S}_s \cup \mathcal{S}_t \vert-2-i}
      \in \mathcal{S}_t$}
    \State $\mathcal{S}_t \gets \mathcal{S}_t \setminus \hat{P}_{\vert \mathcal{S}_s \cup \mathcal{S}_t \vert-2-i}$
    \EndIf
    \State $i \gets i + 1$
    \doWhile{$(\mathcal{S}_s, \mathcal{S}_t) \neq (\{P_s\}, \{P_t\})
      \wedge \dejavu(A(\mathcal{S}_s, \mathcal{S}_t))$} \Comment{End
      of undo}
    \If{$(\mathcal{S}_s, \mathcal{S}_t) \neq (\{P_s\}, \{P_t\})$}
    \State $(\mathcal{S}_s, \mathcal{S}_t) \gets A(\mathcal{S}_s, \mathcal{S}_t)$
    \EndIf
    \doWhile{$(\mathcal{S}_s, \mathcal{S}_t) \neq (\{P_s\}, \{P_t\})$}
    \State {\bf return} $(\mathcal{S}, \mathcal{P} \setminus \mathcal{S})$
  \end{algorithmic}
\end{algorithm}

After initializing $\mathcal{S}_s$ and $\mathcal{S}_t$, our
B\&B algorithm calls $\greed(\mathcal{S}_s, \mathcal{S}_t,
\infty)$, see Algorithm~\ref{algo:BB} in Section~\ref{subsec:greedy}. In later calls of
$\greed(\mathcal{S}_s, \mathcal{S}_t, bestSoFar)$, we always have
$(\mathcal{S}_s \supsetneq \{P_s\} \vee \mathcal{S}_t \supsetneq
\{P_t\}) \wedge \mathcal{S}_s \cap \mathcal{S}_t = \emptyset$, and
$bestSoFar$ amounts to the minimum weight ($\phi_{\mathcal{P}'}$
value) of the $P_s$-$P_t$ cuts found so far (see lines 5-10 of
Algorithm~\ref{algo:BB}). 

The following definition will make it easier to address the remaining
questions whose answer was left open in Section~\ref{subsec:BB_basicAlgo}.

\begin{definition}
If $\hat{P}_{\vert \mathcal{S}_s \cup \mathcal{S}_t \vert - 2}$ is
contained in $\mathcal{S}_s$, the \emph{alternative} to $(\mathcal{S}_s,
\mathcal{S}_t)$ called $A(\mathcal{S}_s, \mathcal{S}_t)$
 is $(\mathcal{S}_s \setminus \{\hat{P}_{\vert
  \mathcal{S}_s \cup \mathcal{S}_t \vert - 2}\}, \mathcal{S}_t
\cup \{\hat{P}_{\vert \mathcal{S}_s \cup \mathcal{S}_t \vert -
  2}\})$. If $\hat{P}_{\vert \mathcal{S}_s \cup \mathcal{S}_t
  \vert - 2}$ is contained in $\mathcal{S}_t$, the \emph{alternative} to
$(\mathcal{S}_s, \mathcal{S}_t)$ is $A(\mathcal{S}_s, \mathcal{S}_t) :=
(\mathcal{S}_s \cup
\{\hat{P}_{\vert \mathcal{S}_s \cup \mathcal{S}_t \vert - 2}\},
\mathcal{S}_t \setminus \{\hat{P}_{\vert \mathcal{S}_s \cup
  \mathcal{S}_t \vert - 2}\})$.
\end{definition}

The answer to 2a) now is ``Undo the assignment of $\hat{P}_{\vert
  \mathcal{S}_s \cup \mathcal{S}_t \vert - 2}$. Keep undoing the
latest assignments until some $\hat{P}_{\vert \mathcal{S}_s \cup
  \mathcal{S}_t \vert - 2 - i}$, $i \geq 1$, is reached such that
$\greed(\cdot, \cdot, \cdot)$ has \emph{not} yet been called with the
first two arguments given by $A(\mathcal{S}_s, \mathcal{S}_t)$.'' In
the pseudocode of Algorithm~\ref{algo:BB}, a boolean function called
$\dejavu(\cdot, \cdot)$ is used to express whether $A(\mathcal{S}_s,
\mathcal{S}_t)$ has entered the call of $\greed(\cdot, \cdot, \cdot)$
before, see line 19 of Algorithm~\ref{algo:BB}. This line guarantees
termination of our B\&B algorithmB\&B algorithm. The answer to 2b) then
is ``call $\greed(\cdot, \cdot, \cdot)$ with $A(\mathcal{S}_s,
\mathcal{S}_t)$ and the current value of $bestSoFar$'' (see lines 21
and 4 of Algorithm~\ref{algo:BB}).

\if #0
\noindent We conclude this section with two implementation details.

\begin{itemize}
\item We implement the boolean function $\dejavu(\cdot, \cdot)$ as a
  boolean vector called $doneWith[\cdot]$. The two arguments
  $\mathcal{S}_s$, $\mathcal{S}_t$ of $\dejavu(\cdot, \cdot)$
  correspond to a single index $i$ of $doneWith$. Specifically, $i$ is
  the cardinality of $\mathcal{S}_s \cup \mathcal{S}_t \setminus
  \{P_s, P_t\}$ or, equivalently, the highest index, $\vert
  \mathcal{S}_s \cup \mathcal{S}_t \vert - 2$, in
  Notation~\ref{notation:indices}. The vector $doneWith$ has length
  $\vert \mathcal{P} \vert - 2$, and its entries are initialized to
  false. Anytime $\greed(\cdot, \cdot, \cdot)$ is called with
  $(\mathcal{S}_s, \mathcal{S}_t) = A(\hat{\mathcal{S}}_s,
  \hat{\mathcal{S}}_t)$ for some $\hat{\mathcal{S}}_s,
  \hat{\mathcal{S}}_t \subset \mathcal{P}$, $doneWith[i]$ is set to
  true. Furthermore, if backtracking goes back behind an
  index $j$, $doneWith[j]$ is set to false.
\item For efficient updates of $b(\mathcal{S}_s, \mathcal{S}_t)$ in
  Algorithm~\ref{algo:BB}, \ie when $\mathcal{S}_s$ or $\mathcal{S}_t$
  grows or shrinks by one part, we use distributions as in
  Section~\ref{subsec:goodC2_running_times} --- this time the \emph{rows} of the
  contingency table of $\mathcal{P}$ and $\mathcal{P}'$. Specifically,
  let $P \in \mathcal{P}$. Then $d_{P}[\cdot]$ is a vector of length
  $\vert \mathcal{P}' \vert$ defined by $d_{P}[j] := \vert P \cap P'_j
  \vert \mbox{~for~} 1 \leq j \leq \vert \mathcal{P}' \vert$. We
  define $d_{\mathcal{S}_s} := \sum_{P \in \mathcal{S}_s} d_{P}$. Now,
  adding or removing a part $P$ from $\mathcal{S}_s$ corresponds to
  the command $d_{\mathcal{S}_s} \gets d_{\mathcal{S}_s} + d_{P}$ and
  $d_{\mathcal{S}_s} \gets d_{\mathcal{S}_s} - d_{P}$,
  respectively. By carrying along $d_{\mathcal{S}_s}$ and
  $d_{\mathcal{S}_t}$, we can compute the bounds $b(\cdot, \cdot)$ as
  follows.

  \begin{equation*}
    \label{eq:bound2}
    b(\mathcal{S}_s, \mathcal{S}_t) = \sum^{\vert \mathcal{P}' \vert}_{j=1}
    \min\{d_{\mathcal{S}_s}[j], d_{\mathcal{S}_t}[j]\}.
  \end{equation*} 

Since the distributions $d_{P'}\lbrack \cdot \rbrack$ are basically
long vectors, our implementation is well-suited for parallel
processing.
\end{itemize}

\fi

\subsection{Extensions of B\&B from \ctwo to \cthree and \cfour.}
\label{subsubsec:extensions}
The extension from \ctwo to \cthree needs two adaptations. First, an
early exit $(\mathcal{S}_s, \mathcal{S}_s')$ or $(\mathcal{S}_t,
\mathcal{S}_t')$ must fulfill $\mathcal{S}_s' \neq \emptyset$ and
$\mathcal{S}_t' \neq \emptyset$, respectively. Second, assume that our
B\&B algorithm has reached a point where all $P \in
\mathcal{P}$ have been assigned to the $s$-side or to the $t$-side. If
$\mathcal{S}'$ is still in $\{\emptyset, \mathcal{P}\}$, we modify it
such that it is not in $\{\emptyset, \mathcal{P}\}$ anymore and such
that the damage to $\phi_{\mathcal{P}'}(\cdot)$ is minimum.

If $\cfour$-correspondences are to be found, the search can be
interrupted whenever there exists $P_t \in \mathcal{S}_t$ such that
$\vert P_t \cap U_{\mathcal{S}_s'} \vert > \vert P_t \vert / 2$. A second
analogous criterion for interrupting the search arises from exchanging
the roles of $s$ and $t$. Moreover, early exits $(\mathcal{S}_s,
\mathcal{S}_s')$ [$(\mathcal{S}_t, \mathcal{S}_t')$] have to be
checked for mutuality of $\mathcal{S}_s$ and $\mathcal{S}_s'$
[$\mathcal{S}_t$ and $\mathcal{S}_t'$]. Analogously, at any point
where all $P \in \mathcal{P}$ have been assigned to the $s$-side or to
the $t$-side, the current correspondence $(\mathcal{S}, \mathcal{S}')$
must be checked for mutuality of $\mathcal{S}$ and $\mathcal{S}'$.

\subsection{Running times.}
\label{sec:app:run}
The detailed running times of the algorithms under consideration are given in 
Tables~\ref{tab:running_BB} and~\ref{tab:running_greed} below.
\begin{table}[h!]
\caption{Running times (in seconds) for calculating the $\vert
  \mathcal{P} \vert - 1$ best correspondences using the B\&B algorithm
  from Section~\ref{sec:BB}. Minima, mean values and maxima are over
  10 runs (the community detection algorithm is non-deterministic).}
\label{tab:running_BB}
\begin{center}
\begin{small}
\scalebox{0.78}{
\begin{tabular}{ l | l | r | r | r | r }
Graph ID & Name & Min & Mean & Max\\ \hline \hline
 1 & \textsc{p2p-Gnutella}          & {0.052}  & {0.060} & {0.070}\\\hline
 2 & \textsc{PGPgiantcompo}         & {0.256}  & {0.313} & {0.379}\\\hline
 3 & \textsc{email-EuAll}           & {0.255}  & {0.370} & {0.574}\\\hline
 4 & \textsc{as-22july06}           & {0.292}  & {0.329} & {0.386}\\\hline
 5 & \textsc{soc-Slashdot0902}      & {1.048}  & {1.557} & {2.855}\\\hline
 6 & \textsc{loc-brightkite\_edges} & {4.309}  & {5.520} & {11.210}\\\hline
 7 & \textsc{loc-gowalla\_edges}    & {28.013}  & {50.265} & {240.330}\\\hline
 8 & \textsc{coAuthorsCiteseer}     & {19.871}  & {29.574} & {56.117}\\\hline
 9 & \textsc{wiki-Talk}             & {38.403}  & {1230.200} & {9554.700}\\\hline
 10 & \textsc{citationCiteseer}     & {12.297}  & {13.353} & {15.488}\\\hline
 11 & \textsc{coAuthorsDBLP}        & {27.791}  & {23.892} & {26.606}\\\hline
 12 & \textsc{web-Google}           & {20.414}  & {22.309} & {25.432}\\\hline
 13 & \textsc{coPapersCiteseer}     & {77.912}  & {356.130} & {1961.300}\\\hline
 14 & \textsc{coPapersDBLP}         & {38.824}  & {36.438} & {39.386}\\\hline
\end{tabular}}
\end{small}
\end{center}
\end{table}
\begin{table}[h!]
\caption{Running times (in seconds) for calculating $\vert \mathcal{P}
  \vert - 1$ correspondences using the algorithm \greed from
  Section~\ref{sec:BB}. Minima, mean values and maxima are over 10
  runs (the community detection algorithm is non-deterministic).}
\label{tab:running_greed}
\begin{center}
\begin{small}
\scalebox{0.78}{
\begin{tabular}{ l | l | r | r | r | r }
Graph ID & Name & Min & Mean & Max\\ \hline \hline
 1 & \textsc{p2p-Gnutella}          & {0.034} & {0.037} & {0.041}\\\hline
 2 & \textsc{PGPgiantcompo}         & {0.119} & {0.136} & {0.150}\\\hline
 3 & \textsc{email-EuAll}           & {0.134} & {0.166} & {0.205}\\\hline
 4 & \textsc{as-22july06}           & {0.125} & {0.148} & {0.189}\\\hline
 5 & \textsc{soc-Slashdot0902}      & {0.435} & {0.691} & {0.958}\\\hline
 6 & \textsc{loc-brightkite\_edges} & {2.102} & {2.173} & {2.304}\\\hline
 7 & \textsc{loc-gowalla\_edges}    & {12.980} & {14.411} & {15.479}\\\hline
 8 & \textsc{coAuthorsCiteseer}     & {7.852} & {8.340} & {9.086}\\\hline
 9 & \textsc{wiki-Talk}             & {23.290} & {28.893} & {36.077}\\\hline
 10 & \textsc{citationCiteseer}     & {6.198} & {6.801} & {7.144}\\\hline
 11 & \textsc{coAuthorsDBLP}        & {10.620} & {11.194} & {11.683}\\\hline
 12 & \textsc{web-Google}           & {8.815} & {9.575} & {10.243}\\\hline
 13 & \textsc{coPapersCiteseer}     & {18.849} & {20.035} & {22.214}\\\hline
 14 & \textsc{coPapersDBLP}         & {15.583} & {16.974} & {19.239}\\\hline
 15 & \textsc{as-skitter}           & {21.423} & {22.466} & {24.134}\\\hline
\end{tabular}}
\end{small}
\end{center}
\end{table}

\end{document}